\documentclass[]{article}



\usepackage{graphics} 
\usepackage{amsmath} 
\usepackage{amsthm}
\usepackage{ulem}

\usepackage{amssymb}  
\usepackage{color}
\usepackage{multirow}
\usepackage{bm}
\usepackage{cite}
\usepackage{amsmath,amssymb,dsfont}
\usepackage{algorithmic}
\usepackage{graphicx}
\usepackage{graphics} 
\usepackage{epsfig} 
\usepackage{algorithm}

\usepackage{algorithmic}

\usepackage{graphicx}           	
\usepackage{epstopdf}           	
\usepackage{subfloat}				
\usepackage{subfigure}
\usepackage{url}
\usepackage{tikz}
\usepackage{verbatim}
\usepackage{xcolor}
\usepackage{soul,color}

\newcommand{\V}{\mathcal{V}}
\newcommand{\E}{\mathcal{E}}

\newtheorem{definition}{Definition}

\newtheorem{corollary}{Corollary}
\newtheorem{theorem}{Theorem}
\newtheorem{lemma}{Lemma}
\newtheorem{assumption}{Assumption}
\newtheorem{proposition}{Proposition}
\newtheorem{remark}{Remark}

\newtheorem{policy}{Policy}



\newcommand{\argmin}[1]{\underset{#1}{\mathrm{argmin\,}}}
\newcommand{\vx}{\boldsymbol{x}}

\newcommand{\vu}{\boldsymbol{u}}
\newcommand{\vy}{\boldsymbol{y}}

\renewcommand{\eqref}[1]{Eq.~(\ref{#1})}

\newcommand\diag{\operatorname{diag}}

\newcommand{\veta}{\boldsymbol{\eta}}

\newcommand{\real}{\mathbb{R}}

\newcommand{\integer}{\mathbb{Z}}
\newcommand{\integernonnegative}{\mathbb{Z}_{\ge 0}}



\newcommand{\Exp}{\mathds{E}} 
\def\Exp{\mathbb{E}}



\title{\LARGE \bf
Title
}

	\title{Optimal Policy Design for Repeated Decision-Making under Social Influence}
		\author{Chiara Ravazzi, Valentina Breschi,  Paolo Frasca, \\ Fabrizio Dabbene, Mara Tanelli}

\begin{document}

\maketitle
\thispagestyle{empty}
\pagestyle{empty}

\begin{abstract}
In this paper, we present a novel model to characterize individual tendencies in repeated decision-making scenarios, with the goal of designing model-based control strategies that promote virtuous choices amidst social and external influences. Our approach builds on the classical Friedkin and Johnsen model of social influence, extending it to include random factors (e.g., inherent variability in individual needs) and controllable external inputs.
We explicitly account for the temporal separation between two processes that shape opinion dynamics: individual decision-making and social imitation. While individual decisions occur at regular, frequent intervals, the influence of social imitation unfolds over longer periods. 
The inclusion of random factors
naturally leads to dynamics that do not converge in the classical sense. However, under specific conditions, we prove that opinions exhibit ergodic behavior.
Building on this result, we propose a constrained asymptotic optimal control problem designed to foster, on average, social acceptance of a target action within a network. To address the transient dynamics of opinions, we reformulate this problem within a Model Predictive Control (MPC) framework. Simulations highlight the significance of accounting for these transient effects in steering individuals toward virtuous choices while managing policy costs.
\end{abstract}

\section{Introduction}
\label{sec:intro}
In our daily routines, we often make repeated decisions. For example, each day we choose the mode of transportation for our commute, considering various factors such as our schedule, weather conditions, economic costs, and even ethical or social norms that influence our preference for specific mobility options.

From this simple example, it is already clear that these decisions are  heavily influenced by both the social environment one is immersed in and other external factors \cite{friedkin1986formal}, as well as by our habits and the resistance to change them. Indeed, individuals often demonstrate stubborn behaviors, being hesitant to change their habits (even when it could benefit them in the long run) due to either rational or psychological factors \cite{friedkin1990social}. 

Changes in one's habits can nonetheless be fostered by tailor-made policies that encourage specific virtuous actions, e.g., the daily adoption of sustainable mobility services like bike sharing. Given the interplay of different factors in shaping individual recurrent choices, these policies should account for $(i)$ the characteristics of individuals, $(ii)$ the complex relationships between them, as well as $(iii)$ the consequences of past policy decisions (as discussed in \cite{breschi2022driving,breschi2022fostering}). In a related line of research, some recent works started investigating the influence of recommendations on the dynamics within social networks. Based on \cite{rossi2022closed}, where a feedback interaction between a recommendation system and a single user is introduced, \cite{castro2018opinion,goyal2019maintaining} integrate the dynamics of social interactions and the recommendation system within influence networks. Meanwhile, the paper \cite{sprenger2024control} leverages an extended Friedkin and Johnsen model \cite{friedkin1990social} to introduce a Model Predictive Control (MPC) strategy for user engagement maximization, hence designing a (non-personalized) recommendation policy that accounts for the evolution of opinions over time.    
\paragraph*{Contribution} In this work, our ultimate goal is to design effective nudge policies that favor virtuous actions within repetitive decision-making scenarios that account for the different players shaping the dynamics of opinions. To this end, we propose a new opinion dynamics model based on the well-established Friedkin and Johnsen model \cite{friedkin1990social}. Taking advantage of the ability of the latter to capture the intricate interplay between individual behaviors and social influences, our model also incorporates \textit{random factors} and \textit{external} (controlled) policies that can shape opinion dynamics. The former represents the variability likely to affect individual inclination over repetitive choices, as well as the random nature of social acceptance toward specific actions that a policymaker seeks to promote. The latter allows us to incorporate the effect of external policies in shaping personal choices. In the absence of external control actions, the proposed dynamics does not converge, but persistently oscillates. Nonetheless, under suitable assumptions, we prove that the resulting stochastic process modeling individual opinions converges almost surely to a final limit and that the system is ergodic. In turn, this implies that, over a sufficiently long period, the time averages of inclinations converge to their expected values.

With this in mind, we propose two optimization-based control frameworks to foster the propagation and acceptance of desired actions within the network on average while containing policy efforts. Specifically, we initially design control actions only looking at the asymptotic behavior of average individual inclinations, then considering also transients by formulating an MPC problem to foster virtuous actions in repetitive settings. Compared to \cite{sprenger2024control}, in formulating the policy design problem, the recommender is not intended as a new node in the network able to influence at each time individuals. Instead, we intend the policymaker as an external influence able to act directly on the bias of individual inclination. This choice allows us $(i)$ to design \textit{personalized} policies for each individual, and $(ii)$ to explicitly account for the fact that nudging/incentive policies often do not act directly on one's inclination, but rather aim at tearing down specific barriers that prevent a certain virtuous choice from being made (e.g., fidelity discounts for bike-sharing users). Furthermore, unlike prior literature, our policy design approach does not entail observing opinions reflecting inclinations toward acceptance. Instead, we are restricted to observing only their (binary) realizations. This choice, together with the introduction of a budget constraint, allows us to formulate the decision-making problem considering two core practical aspects in policy design, e.g., limited resources and access to limited information on individual opinions.  

The impact of personalization is evaluated in our numerical example by comparing the proposed policies with a control action obtained by uniformly dividing the available budget across individuals, showcasing the benefit of tailor-made policy actions.
\paragraph*{Outline} We introduce the proposed opinion dynamics model in Section~\ref{sec:model}, characterizing its properties in the absence of control and presence of a persistent constant control action in Section~\ref{sec:ergodicity}. Two constant control policies, respectively obtained by uniformly dividing the available budget across individuals and optimizing an acceptance/cost trade-off by looking at the asymptotic behavior of opinions only, are introduced in Section~\ref{sec:CCP}. These policies will serve as benchmarks for comparison with the model predictive controller presented in Section~\ref{sec:control}, where we consider both the cases where an oracle provides prior information on the average inclinations and such information is estimated from the available observations. The different strategies are then analyzed in Section~\ref{sec:numerical} on a numerical example. We conclude the paper with some remarks and discussion on future developments. 
\paragraph*{Notation and useful facts} The sets of real and nonnegative integer numbers are denoted by $\real$ and $\integernonnegative$, respectively. Column vectors are indicated in boldface, while matrices are boldfaced capital letters. Unitary and zero (column) vectors are accordingly denoted as $\boldsymbol{1}$ and $\boldsymbol{0}$, respectively. A matrix $\boldsymbol{M}\in\real^{n\times n}$ is said row-stochastic if its entries are all nonnegative and $\boldsymbol{M}\boldsymbol{1}=\boldsymbol{1}$, while it is Schur stable if all its eigenvalues lay inside the unit circle. Moreover, $[\boldsymbol{M}]$ denotes a diagonal matrix whose entries are equal to the diagonal of $\boldsymbol{M}$. Given a random vector $\boldsymbol{v}$, $\mathbb{E}[\boldsymbol{v}]$ denotes its expected value.

Given a stochastic process \cite{doob1962stochastic} $\{\vx(t)\}_{\integernonnegative}$, its time-average (also Ces\'aro or Polyak average in some contexts~\cite{polyak1992acceleration}) is given by
\begin{equation}\label{eq:time_average}
 \overline{\vx}(t)=\frac{1}{t}\sum_{\ell=0}^{t-1}\vx({\ell}).   
\end{equation}
By relying on this definition, we can then define the ergodicity of a random process as follows.
\begin{definition}[Ergodicity of a random process]\label{def:ergo}{We say that a stochastic process $\{\vx(t)\}_{t\in\integernonnegative}$ is ergodic if there exists a random variable $\vx_{\infty} $ such that 
\begin{equation}\label{eq:ergodicity}
\lim_{t\to \infty}\overline{\vx}(t)=\Exp[\vx_{\infty}],
\end{equation} 
almost surely}\end{definition}
A process is instead mean-square ergodic when 
\begin{equation}
    \lim_{k \to \infty} \Exp\left[\left\|\overline{\vx}(t)-\Exp[\vx_{\infty}]
\right\|_2^2\right] = 0.
\end{equation}
In our analysis, we focus on the notion of ergodicity provided in Definition~\ref{def:ergo}. Nonetheless, note that under specific assumptions, mean-square convergence can be derived from almost-sure convergence. For instance, this implication is true for a uniformly bounded sequence of random variables, by the Dominated Convergence Theorem~\cite{borkar1995probability}.  

\section{A new model for repetitive decision-making scenarios} \label{sec:model}
We consider a social network represented by a directed graph\footnote{To avoid trivial cases, we implicitly assume that graphs comprise at least three nodes.} $ \mathcal{G} =(\mathcal{V,E},\boldsymbol{P})$, consisting of a set $\mathcal{V} $ of agents and a set of edges $\mathcal{E}\in \mathcal{V\times V}$ describing the (directed) influences among them, i.e., if $(v,w)\in\E$, we will say that $v$ is potentially influenced by node $w\in\mathcal{V}$. The intensities of their potential interactions are encoded in a weight matrix $\boldsymbol P\in\mathbb{R}^{|\mathcal{V}|\times|\mathcal{V}|}$, that we assume to be row-stochastic and nonnegative, while the set of agents influencing node $v\in\mathcal{V}$ is denoted with the symbol $\mathcal{N}_v=\{w\in\V: P_{vw}>0\}$. 

Each agent in the network is associated with a pair $(x_v(t),y_v(t))\in \mathcal{X}\times \mathcal{Y}$ of variables, where $x_v(t) \in \mathcal{X}=[0,1]$ represents the individual \textit{hidden} inclination to a specific choice while $y_v(t) \in \mathcal{Y}=\{0,1\}$ is the effective realization of such choice at time $t\in\mathbb{N}$. Specifically, $x_v(t)\approx 0$ indicates opposition to making the target choice, and $x_v(t)\approx 1$ denotes support for it. Instead, $y_v(t)=1$ means that the agent makes the target choice at time $t\in\mathbb{Z}$ and $y_v(t)=0$ otherwise indicates that the agent has not made it.

We assume that each agent $v\in\mathcal{V}$ starts from an \textit{initial belief} $x_v(0) = x_0 $ that, at time step $ t\in\integernonnegative$, is updated according to two possible mechanisms. 

With probability $\alpha\in[0,1]$, the hidden inclination evolves as a response to social interactions and external factors, with the updated opinion becoming a convex combination of the opinions coming from the neighbors and the external influence. In particular, while the weight of the influence on $v \in \mathcal{V}$ from each neighbor \( w\in\mathcal{N}_v \) is given by \( P_{vw} \), the overall contribution of the neighbors is controlled by the parameter \( \lambda_{v}\in[0,1] \) and the updated opinion is given by
\begin{equation}
    x_{v}(t+1) = \lambda_{v} \sum_{w \in \mathcal{N}_{v}} P_{vw} x_{w}(t) + (1 - \lambda_{v}) \eta_{v}(t),
\end{equation}
where  \( \eta_{v}(t) \in [0,1]\) encodes the effect of external influences and individual biases at time \( t \). In particular, we decompose this last element as follows: 
\begin{equation}\label{eq:overall_external}
    \eta_{v}(t)={\eta^{0}_v+\eta_{v}^{\mathrm{nc}}(t)+u_{v}(t)},
\end{equation}
where $\eta^{\mathrm{0}}_v\in[0,1]$ is a deterministic constant indicating the \textit{individual bias} of the agent that, in the same spirit of Friedkin and Johnsen model \cite{friedkin1990social}, steers individual opinion along with the social influence in the absence of external influences. Instead, the uncontrollable input $\eta_{v}^{\mathrm{nc}}(t)$ represents possible (yet unpredictable) variations in the individual intrinsic bias (e.g., changes in the weather and, thus, the choice not to take a shared bike). In what follows, we make the following assumption on this uncontrollable component of our model.
\begin{assumption}\label{ass:noise}
    The uncontrollable inputs $\{\eta_{v}^{\mathrm{nc}}(t)\}_{v \in \mathcal{V}}$ are independent, identically distributed (i.i.d.) with zero mean and known covariance $\diag[\{\sigma_v^{2}\}_{v \in \mathcal{V}}]$ for all $t \in \integernonnegative$.
\end{assumption}
This assumption implies that changes induced by external factors cannot overcome the individual bias ${\eta}^{\mathrm{0}}_v$ but only slightly (and temporarily) modify it. Moreover, these changes are not correlated among different individuals, mimicking the fact that ultimately an external factor (e.g., the weather) can have a different impact on one's choices (e.g., renting a shared bike). Finally, the \textit{control input}
\begin{equation}\label{eq:stoc_constraint_input}
u_{v}(t) \in [0,1-\eta_{v}^0-\eta_{v}^{\mathrm{nc}}(t)],~v \in \mathcal{V},
\end{equation}
is what the policymaker can shape to nudge individual choices towards the target one, generally in a personalized fashion.
    
Instead, with probability $1-\alpha$, the agents' opinions are updated irrespective of social interaction, i.e., it becomes simply a convex combination of the previous individual opinion and external influences/initial bias as 
\begin{equation}\label{eq:dyn_B}
    x_{v}(t+1)=
    \lambda_{v}x_{w}(t)+(1-\lambda_{v})\eta_{v}(t),~v \in \mathcal{V},
\end{equation}
where $\lambda_{v} \in [0,1]$ now has the role of a relative weight that quantifies the effect of the external influence and initial biases on the evolution of the opinion. 

By defining the vector $\boldsymbol{x}(t) \in [0,1]^{|\mathcal{V}|}$ grouping the inclinations of all agents in the network and the vector $\boldsymbol{\eta}(t) \in [0,1]^{|\mathcal{V}|}$ of all external factors and biases, the previous dynamics can be written in matrix form as
\begin{subequations}\label{eq:overall_dyn}
\begin{equation}\label{eq:dyn}
    \boldsymbol{x}(t+1)=\boldsymbol{\Lambda}\boldsymbol{P}(t)\boldsymbol{x}(t)+(\boldsymbol{I}-\boldsymbol{\Lambda})\boldsymbol{\eta}(t),
\end{equation}
where 
\begin{equation}\label{eq:P_varying}
\boldsymbol{P}(t)=\begin{cases}
\boldsymbol{P}&\text{w.p. }\alpha\\
\boldsymbol{I}&\text{w.p. }1-\alpha
\end{cases}
\end{equation}
\end{subequations}
and $\boldsymbol{\Lambda}$ is a diagonal matrix with elements equal (and equally ordered) to parameters $\{\lambda_v\}_{v \in \mathcal{V}}$.

Let us further introduce the vectors $\boldsymbol{u}(t)$, $\boldsymbol{\eta}^0$ and $\boldsymbol{\eta}^{\mathrm{nc}}(t)$ grouping the control input, initial biases and the external factor acting on each agent, respectively. Throughout the remainder of the paper, we will assume that policymakers have a limited budget $\beta$ at their disposal to nudge agents' choices, i.e., the control input vector has to satisfy the following: 
\begin{equation}\label{eq:budget_constr}
    \boldsymbol{1}^{\top}\boldsymbol{u}(t)\leq \beta , \ \forall t\in\integernonnegative.
\end{equation}

\begin{remark}[Dynamics and external factors]
Assuming that the uncontrollable inputs are i.i.d. and independent of the state $\vx(t)$ is crucial to prove the properties of the dynamics in \eqref{eq:overall_dyn} analyzed in Section~\ref{sec:ergodicity}. Meanwhile, the assumption on the specific distribution and its covariance are mainly exploited in policy design.     
\end{remark}

Finally, we assume that the measured acceptance variables of the target choice, grouped in the vector $\boldsymbol{y}(t) \in \{0,1\}^{|\mathcal{V}|}$, are random variables as stated formally below.    
\begin{assumption}[Acceptance variables]\label{ass:Y}
Given
$\vx(t)\in[0, 1]$ for $t\in\mathbb{N}$, we assume the acceptance variables are random variables whose conditional distribution satisfies
\begin{subequations}\label{eq:acceptance_assump}
\begin{align}
&\mathbb{P}(y_v(t)=1|\boldsymbol{x}(t))=x_v(t),\\
&\mathbb{P}(y_v(t)=0|\boldsymbol{x}(t))=1-x_v(t),
\end{align}
\end{subequations}
for all $t\in\integernonnegative$.
\end{assumption}
Note that introducing these acceptance variables toward a target choice that a policy-maker intends to promote represents a situation in which only a noisy measure of individuals' opinions is available. In other words, the actual opinions of agents are not directly observable, and the available information is simply related to the manifested individual choices. 

\subsection{Novelty and relation to prior works}
The dynamics in \eqref{eq:dyn} is a novel extension of the classical Friedkin
and Johnsen model \cite{friedkin1990social} as well as its variations in opinion dynamics \cite{ravazzi2021ergodic}. The main distinguishing features of the model with respect to existing ones are: $(i)$ the presence of random factors, i.e., the random variable $\boldsymbol{\eta}^{\mathrm{nc}}(t)$, $t\in\integernonnegative$, modeling variability in individual inclinations due to external (uncontrollable) events; $(ii)$ the introduction of the dependence on a controlled input (through $\boldsymbol{u}(t)$, $t\in\integernonnegative$), mimicking the impact of policy actions to nudge a target choice; $(iii)$ the probabilistic update of opinions, according to the parameter $\alpha$ in \eqref{eq:P_varying}. 

Looking at how the controlled input impacts the evolution of opinions with respect to \cite{sprenger2024control}, in our model, the policymaker is not intended as a new node introduced in the network, able to influence individuals at each time with an equal action, but is meant as a (possibly personalized) external influence. In this sense, the control represents actions aimed at modifying the intrinsic bias of agent $v\in\mathcal{V}$, through, e.g., incentive strategies and information campaigns, rather than modifying their opinion directly. 

Meanwhile, considering \eqref{eq:P_varying} allows us to account for the fact that two processes, opinion updates related to social imitation and those related to individual decision making, can act on different time scales. In fact, social influence or pressure resulting from interactions that individuals have with their neighbors generally takes longer to have a significant effect, e.g., several days, weeks, or even months. Instead, individual decision-making may occur more frequently and on a shorter time scale, e.g., decisions may be made on a daily basis. The parameter $\alpha$ in \eqref{eq:P_varying} is used in our model to model the frequency of these two updating mechanisms. 

Lastly, our choice of introducing the acceptance variables (see Assumption~\ref{ass:Y}) allows us to distinguish our work from the literature related to robust and stochastic control \cite{fisher2009linear}. Indeed, instead of designing controllers taking into account that the opinion dynamics is uncertain and not perfectly observable, our approach is based on the framework of iterated random systems. In turn, this allows us to leverage a rich ergodic theory, that can be applied to study the system's long-run behavior and design nudging policies.

\subsection{Illustrative Scenarios}
Our model covers several aspects that customarily describe how people make repetitive decisions, e.g., daily choices of recycling garbage or using car/bike sharing services for daily commutations.

Indeed, one's choices are generally dictated by an intrinsic bias that is due to one's habits and/or socioeconomic status, which inherently makes one more or less inclined to repetitively make the same decision or adopt a certain service~(see the discussion in~\cite{Korteling2023}). The impact of such a bias can be mitigated by policies enacted by governments and stakeholders to favor new daily habits or facilitate access to a new service~\cite{Jaffe2022}, e.g., charging fees to people not recycling, or providing loyalty discounts for sharing services. In practice, these policies are nonetheless generally designed under budget constraints, capping investments that can be undertaken to nudge individuals to the target choice. This aspect is accounted for in our framework through the constraint in \eqref{eq:budget_constr}, in turn allowing us to consider more realistic limited budget scenarios. At the same time, uncontrollable events can temporarily shape one's inclination, e.g., weather can either hinder or stimulate the use of bike-sharing services, depending on whether it is bad or good. 

These external factors contribute to shaping individual opinions together with the social environment in which one is immersed. Indeed, in reality, acceptance of new habits or services is also driven by social dictates or homophily~\cite{McPherson2001}, with individual opinions being shaped by what is perceived as \textquotedblleft good\textquotedblright \ or \textquotedblleft bad\textquotedblright \ by the society as well as influenced by the benefits/downside of the targeted choice experiment by peers. Since it requires opinions on the targeted choice to form and shape as well as others make a certain choice or try services out, social influence is nonetheless likely to steer individual opinions on a different time scale with respect to the evolution of one inclination based on individual needs and irrespective of social dictates. 

Ultimately, when analyzing and designing policy strategies, policymakers and stakeholders generally devise or have access to surveys (see, e.g.,~\cite{JRC96151}) that allow them eventually to unveil the socioeconomic status and habits of an individual as well as the manifested acceptance/non-adoption of the targeted choice, but they generally do not give direct access to how one's opinion truly evolves. This is considered in our model through the introduction of the variable $\vy{(t)}$, $t \in \integernonnegative$, linked to the \textit{hidden} thanks to Assumption~\ref{ass:Y}.
  
\section{Asymptotic analysis}\label{sec:ergodicity} 
We first present a theoretical analysis of the behavior of the dynamics~\eqref{eq:dyn} in the absence of control, i.e., when $\boldsymbol{u}(t)=0$ for all $t\in\integernonnegative$. Despite this, we stress that two different sources of randomness have to be considered in this analysis. First, the social imitation mechanism acts in an intermittent way with a certain probability $\alpha$. Second, the variation in the individual bias is still characterized by $\boldsymbol{\eta}^{\mathrm{nc}}(t)$, $t\in \integernonnegative$, that captures the intrinsic volatility of the conditions at each time instant. These two components ultimately make both the latent and acceptance variables continue to oscillate over time. In addition, as $\vy(t)$ are random variables for all $t \in \integernonnegative$, the acceptance variables introduce another source of uncertainty. Nonetheless, under Assumption~\ref{ass:Y} and an additional hypothesis on the topology of the influence network, we will show that both the latent and acceptance variables converge to a final limit profile.

Specifically, we make the following assumption on the topology of the influence network associated with $\boldsymbol P$.
\begin{assumption}[Network topology]\label{ass:P}
For any node $v \in\mathcal{V}$, there exists a path from $v\in\mathcal{V}$ to a node $s\in\mathcal{V}$ such that $\lambda_s<1$. 
\end{assumption}
This assumption is common in opinion dynamics (see, e.g., \cite{ravazzi2021ergodic}), in principle allowing to have some agents that are not directly sensitive to external factors. At the same time, it also implies that each agent is influenced by at least another one, hence indirectly being affected by such external factors through repeated interactions with neighbors over time. This assumption is thus sufficient to guarantee that $\boldsymbol{\Lambda}\boldsymbol{P}(t)$ is, on average, Schur stable, allowing the expected opinions dynamics to converge as shown in the following proposition.
\begin{proposition}[Expected hidden inclinations' dynamics]\label{prop:convergence}
Let $\boldsymbol{u}(t)=0$ for all $z \in \integernonnegative$. Under Assumptions \ref{ass:noise} and \ref{ass:P}, the dynamics in \eqref{eq:overall_dyn} satisfy
\begin{equation}\label{eq:dyn_ev}
\mathbb{E}[\boldsymbol{x}(t+1)]=\boldsymbol{\Lambda}\overline{\boldsymbol{P}}\mathbb{E}[\boldsymbol{x}(t)]+(\boldsymbol{I}-\boldsymbol{\Lambda})\boldsymbol{\eta}^0,\end{equation} 
for any initial condition $ \boldsymbol{x}(0) \in\mathbb{R}^{|\mathcal{V}|}\in[0,1]$ and any $t\in\integernonnegative$. Moreover, $\lim_{t\rightarrow\infty}\mathbb{E}[\boldsymbol{x}(t)]=\vx^{\star}$ with
\begin{align}
\label{eq:equilibria}
\vx^{\star}&:=\mathbb{E}[{\boldsymbol x}(\infty)]=(\boldsymbol{I}-\boldsymbol\Lambda\overline{\boldsymbol P})^{-1}(\boldsymbol{I}-\boldsymbol\Lambda)\boldsymbol{\eta}^0\end{align}
and $\overline{\boldsymbol{P}}:=(1-\alpha)\boldsymbol{I}+\alpha\boldsymbol{P}$.
\end{proposition}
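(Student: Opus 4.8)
\emph{Proof plan.} The plan is to first obtain the closed-form recursion \eqref{eq:dyn_ev} by taking expectations in \eqref{eq:dyn}, exploiting the independence built into the model, and then to derive the convergence of $\mathbb{E}[\boldsymbol{x}(t)]$ to $\boldsymbol{x}^{\star}$ from Schur stability of $\boldsymbol{\Lambda}\overline{\boldsymbol{P}}$. For the first part, I would fix $t\in\integernonnegative$ and condition on the $\sigma$-algebra $\mathcal{F}_t$ generated by $\boldsymbol{x}(0)$ and $\{\boldsymbol{P}(s),\boldsymbol{\eta}^{\mathrm{nc}}(s)\}_{s<t}$: by construction $\boldsymbol{x}(t)$ is $\mathcal{F}_t$-measurable, while the time-$t$ selection $\boldsymbol{P}(t)$ in \eqref{eq:P_varying} and the noise $\boldsymbol{\eta}^{\mathrm{nc}}(t)$ are independent of $\mathcal{F}_t$ --- this is precisely the independence noted in the Remark after \eqref{eq:budget_constr}. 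Since $\boldsymbol{x}(t)$ is integrable (an affine image, through bounded matrices, of $\boldsymbol{x}(0)$ and the zero-mean noise), taking expectations in \eqref{eq:dyn} with $\boldsymbol{u}(t)=0$, factoring $\mathbb{E}[\boldsymbol{P}(t)\boldsymbol{x}(t)]=\mathbb{E}[\boldsymbol{P}(t)]\,\mathbb{E}[\boldsymbol{x}(t)]$, and using $\mathbb{E}[\boldsymbol{P}(t)]=\alpha\boldsymbol{P}+(1-\alpha)\boldsymbol{I}=\overline{\boldsymbol{P}}$ together with $\mathbb{E}[\boldsymbol{\eta}(t)]=\boldsymbol{\eta}^0$ (the zero-mean part of Assumption~\ref{ass:noise}) gives \eqref{eq:dyn_ev}.

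For the convergence, \eqref{eq:dyn_ev} is an affine iteration $\boldsymbol{m}(t+1)=\boldsymbol{A}\boldsymbol{m}(t)+\boldsymbol{b}$ with $\boldsymbol{A}:=\boldsymbol{\Lambda}\overline{\boldsymbol{P}}$ and $\boldsymbol{b}:=(\boldsymbol{I}-\boldsymbol{\Lambda})\boldsymbol{\eta}^0$. Assuming $\boldsymbol{A}$ is Schur stable, $\boldsymbol{I}-\boldsymbol{A}$ is invertible, the vector $\boldsymbol{x}^{\star}$ in \eqref{eq:equilibria} is its unique fixed point, and the error $\boldsymbol{e}(t):=\mathbb{E}[\boldsymbol{x}(t)]-\boldsymbol{x}^{\star}$ satisfies $\boldsymbol{e}(t+1)=\boldsymbol{A}\boldsymbol{e}(t)$, whence $\boldsymbol{e}(t)=\boldsymbol{A}^{t}\boldsymbol{e}(0)\to\boldsymbol{0}$ for every initial condition $\boldsymbol{x}(0)$, which is the claimed limit. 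Thus the whole statement reduces to proving $\rho(\boldsymbol{A})<1$.

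To establish $\rho(\boldsymbol{A})<1$, note that $\overline{\boldsymbol{P}}=(1-\alpha)\boldsymbol{I}+\alpha\boldsymbol{P}$ is row-stochastic (a convex combination of row-stochastic matrices), so $\boldsymbol{A}$ is entrywise nonnegative with $v$-th row sum $\lambda_v\le 1$; hence $\rho(\boldsymbol{A})\le\|\boldsymbol{A}\|_{\infty}\le 1$ and it only remains to rule out eigenvalues of modulus one. Suppose $\boldsymbol{A}\boldsymbol{v}=\mu\boldsymbol{v}$ with $|\mu|=1$, $\boldsymbol{v}\neq\boldsymbol{0}$, and pick $i$ attaining $M:=|v_i|=\|\boldsymbol{v}\|_{\infty}>0$. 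The estimate
\begin{equation*}
M=|\mu v_i|=\Big|\sum_{w}A_{iw}v_w\Big|\le\sum_{w}A_{iw}|v_w|\le\Big(\sum_{w}A_{iw}\Big)M=\lambda_i M\le M
\end{equation*}
must hold with equality throughout, which forces $\lambda_i=1$ and $|v_w|=M$ for every out-neighbour $w$ of $i$ in the support graph of $\boldsymbol{A}$. For $\alpha>0$ one has $\overline{P}_{vw}\ge\alpha P_{vw}$, so that graph contains every edge of $\mathcal{E}$ issuing from a node with positive $\lambda$; propagating the displayed argument node by node along a path furnished by Assumption~\ref{ass:P} from $i$ to some $s$ with $\lambda_s<1$ then forces $\lambda=1$ at every node of that path --- in particular $\lambda_s=1$, a contradiction (a node with $\lambda=0$ on the path would already break the equality chain, its row sum being $0<1$). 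Hence $\boldsymbol{A}$ has no unit-modulus eigenvalue and $\rho(\boldsymbol{A})<1$. The degenerate case $\alpha=0$ decouples the agents and makes $\boldsymbol{I}-\boldsymbol{\Lambda}$ singular, so it falls outside the intended regime and may be ignored.

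\emph{Anticipated main obstacle.} The only non-routine step is the last paragraph: translating the purely combinatorial content of Assumption~\ref{ass:P} --- existence, from every node, of a directed path to a node with $\lambda<1$ --- into the quantitative spectral bound $\rho(\boldsymbol{\Lambda}\overline{\boldsymbol{P}})<1$, while carefully tracking which graph the connecting path lives in (that of $\boldsymbol{P}$ versus that of $\overline{\boldsymbol{P}}$) and the combined effect of $\alpha$ and of the weights $\lambda_v$. By contrast, the expectation identity and the linear-systems consequences of Schur stability are entirely standard.
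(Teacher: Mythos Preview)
Your proof follows essentially the same route as the paper's: derive \eqref{eq:dyn_ev} by conditioning on $\boldsymbol{x}(t)$ and using the independence of $\boldsymbol{P}(t)$ and $\boldsymbol{\eta}^{\mathrm{nc}}(t)$ together with Assumption~\ref{ass:noise}, then invoke Schur stability of $\boldsymbol{\Lambda}\overline{\boldsymbol{P}}$ to conclude convergence of the affine iteration. The paper's proof is in fact much terser than yours: it simply asserts that Assumption~\ref{ass:P} yields Schur stability of $\boldsymbol{\Lambda}\overline{\boldsymbol{P}}$ without any justification, whereas you supply a complete argument via the maximum-modulus eigenvector propagation along the path guaranteed by Assumption~\ref{ass:P}. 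So the ``main obstacle'' you anticipated is something the paper does not address at all; your treatment is strictly more detailed and your argument for $\rho(\boldsymbol{\Lambda}\overline{\boldsymbol{P}})<1$ is correct for $\alpha>0$.

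One small slip: your remark that $\alpha=0$ ``makes $\boldsymbol{I}-\boldsymbol{\Lambda}$ singular'' is not accurate --- $\alpha$ does not enter $\boldsymbol{\Lambda}$. What actually goes wrong at $\alpha=0$ is that $\boldsymbol{A}=\boldsymbol{\Lambda}$, so any node with $\lambda_v=1$ (which Assumption~\ref{ass:P} permits) yields a unit eigenvalue and Schur stability fails. The paper does not explicitly exclude $\alpha=0$ either, so this is a gap shared with the original.
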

\begin{proof}By conditioning, from \eqref{ass:Y} we get
\begin{align*}
   \mathbb{E}[\boldsymbol{x}(t+1)]&=\mathbb{E}[\mathbb{E}[\boldsymbol{x}(t+1)|\vx(t)]]\\
   &=\mathbb{E}[\mathbb{E}[\boldsymbol{\Lambda}\boldsymbol{P}(t)\vx(t)|\vx(t)]+(\boldsymbol{I}-\boldsymbol{\Lambda})\veta^{0}\\&=\boldsymbol{\Lambda}\overline{\boldsymbol{P}}\mathbb{E}[\vx(t)]+(\boldsymbol{I}-\boldsymbol{\Lambda})\veta^{0},
\end{align*}
where the second equality stems from Assumption~\ref{ass:noise}. Meanwhile, based on Assumption~\ref{ass:P}, it follows that $\boldsymbol{\Lambda}\overline{\boldsymbol{P}}$ is Schur stable, with $\overline{\boldsymbol{P}}:=(1-\alpha)\boldsymbol{I}+\alpha\boldsymbol{P}$. Consequently, we get 
$\lim_{t\rightarrow\infty}\mathbb{E}[\boldsymbol{x}(t+1)]=\vx^{\star}$, thus concluding the proof.
\end{proof}

By relying on this result, as well as on Assumption~\ref{ass:Y}, we can then formalize the following result on the acceptance variables.
\begin{corollary}[Expected acceptance variables]\label{prop:convergence2}
Let $\vu(t)=0$ for all $t \in \integernonnegative$. Under Assumption~\ref{ass:noise}-~\ref{ass:P} it holds that
$\mathbb{E}[\boldsymbol{y}(t+1)]=\mathbb{E}[\boldsymbol{x}(t+1)]$
and  
\begin{equation}
\lim_{t\rightarrow\infty}\mathbb{E}[\boldsymbol{y}(t)]=\vx^{\star},
\end{equation}
for any initial condition $ \boldsymbol{x}(0) \in\mathbb{R}^{|\mathcal{V}|}\in[0,1]$.
\end{corollary}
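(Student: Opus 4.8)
The plan is to obtain both claims as essentially immediate consequences of Assumption~\ref{ass:Y} and Proposition~\ref{prop:convergence}. First I would establish the identity $\mathbb{E}[\boldsymbol{y}(t+1)]=\mathbb{E}[\boldsymbol{x}(t+1)]$ by conditioning on $\boldsymbol{x}(t+1)$. Since the acceptance variables are bounded (hence integrable), the tower property gives $\mathbb{E}[\boldsymbol{y}(t+1)]=\mathbb{E}[\mathbb{E}[\boldsymbol{y}(t+1)\mid\boldsymbol{x}(t+1)]]$, and reading Assumption~\ref{ass:Y} componentwise yields, for each $v\in\mathcal{V}$, $\mathbb{E}[y_v(t+1)\mid\boldsymbol{x}(t+1)] = 1\cdot\mathbb{P}(y_v(t+1)=1\mid\boldsymbol{x}(t+1)) + 0\cdot\mathbb{P}(y_v(t+1)=0\mid\boldsymbol{x}(t+1)) = x_v(t+1)$. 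Stacking over $v$ gives $\mathbb{E}[\boldsymbol{y}(t+1)\mid\boldsymbol{x}(t+1)]=\boldsymbol{x}(t+1)$, and taking expectations produces $\mathbb{E}[\boldsymbol{y}(t+1)]=\mathbb{E}[\boldsymbol{x}(t+1)]$ for every $t\in\integernonnegative$.

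Then the asymptotic statement follows by passing to the limit. The hypotheses of the corollary (Assumptions~\ref{ass:noise}--\ref{ass:P} together with $\boldsymbol{u}(t)=0$ for all $t$) are precisely those of Proposition~\ref{prop:convergence}, which therefore applies and gives $\lim_{t\to\infty}\mathbb{E}[\boldsymbol{x}(t)]=\vx^\star$. Combining this with $\mathbb{E}[\boldsymbol{y}(t)]=\mathbb{E}[\boldsymbol{x}(t)]$ for all $t\ge 1$ yields $\lim_{t\to\infty}\mathbb{E}[\boldsymbol{y}(t)]=\vx^\star$; note the value $\mathbb{E}[\boldsymbol{y}(0)]=\boldsymbol{x}(0)$ is irrelevant since the limit ignores any finite prefix.

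The only points requiring care are bookkeeping ones: interpreting the conditional laws in Assumption~\ref{ass:Y} componentwise before vectorizing, justifying the iterated-expectation step (immediate from boundedness of $\boldsymbol{y}$), and keeping track of the index shift between $\mathbb{E}[\boldsymbol{y}(t+1)]$ and the limit in $t$. I do not anticipate a genuine obstacle here — the corollary is a short corollary precisely because the linear link in Assumption~\ref{ass:Y} transfers both the recursion and its limit from $\mathbb{E}[\boldsymbol{x}(\cdot)]$ to $\mathbb{E}[\boldsymbol{y}(\cdot)]$ verbatim.
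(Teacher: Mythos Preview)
Your proposal is correct and matches the paper's approach exactly: the paper simply states that the result follows straightforwardly from Assumption~\ref{ass:Y} and Proposition~\ref{prop:convergence} and omits the details. Your conditioning argument $\mathbb{E}[\boldsymbol{y}(t+1)]=\mathbb{E}[\mathbb{E}[\boldsymbol{y}(t+1)\mid\boldsymbol{x}(t+1)]]=\mathbb{E}[\boldsymbol{x}(t+1)]$ followed by the limit from Proposition~\ref{prop:convergence} is precisely the intended route.
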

\begin{proof}
The proof follows straightforwardly from Assumption~\ref{ass:Y} and the result in Proposition~\ref{prop:convergence} and, thus, it is omitted.
\end{proof}
These two results imply that opinions in a social system evolving according to \eqref{eq:dyn} are stable in expectation in the absence of externally controlled action. Moreover, they show that, in the absence of controlled inputs, a final opinion limit profile emerges that is a combination of the expected value of exogenous (uncontrollable) inputs, i.e., $\veta_0(t)$ for $t \in \integernonnegative$, and the expected impact of social interactions. 

After introducing these properties, we can now summarize the features of the dynamics in \eqref{eq:overall_dyn} as follows.
\begin{theorem}[Ergodicity of hidden inclinations]
\label{thm:ergodic}
Consider the random process $\{\boldsymbol x(t)\}_{t\in\integernonnegative}$ defined in \eqref{eq:overall_dyn} and let $\vu(t)=0$, $\forall t \in \integernonnegative$. Under Assumptions \ref{ass:noise} and ~\ref{ass:P}, the following hold.
\begin{enumerate}
\item $\boldsymbol x(k)$ converges in distribution to random variables $\boldsymbol x_{\infty}$ and the distribution of $\boldsymbol x_{\infty}$, is the unique invariant
distribution for \eqref{eq:overall_dyn}.
\item The process is ergodic.
\item The limit random variables satisfy
$
\Exp[\boldsymbol x_\infty]=\boldsymbol x^{\star}.
$ 
\end{enumerate}
\end{theorem}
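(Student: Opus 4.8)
The plan is to read \eqref{eq:overall_dyn} with $\vu(t)\equiv\boldsymbol 0$ as an \emph{iterated random affine system} $\vx(t+1)=f_t(\vx(t))$, where $f_t(\vx):=\boldsymbol\Lambda\boldsymbol P(t)\vx+(\boldsymbol I-\boldsymbol\Lambda)\veta(t)$, and then to invoke the standard contraction‑in‑mean theory for such systems, the one quantitative ingredient being the Schur stability of $\boldsymbol\Lambda\overline{\boldsymbol P}$ established under Assumption~\ref{ass:P} in the proof of Proposition~\ref{prop:convergence}. By \eqref{eq:P_varying}, Assumption~\ref{ass:noise} and the independence of the driving noise from the state, $\{f_t\}_{t\in\integernonnegative}$ are i.i.d.\ and each $f_t$ is independent of $\vx(t)$; moreover $[0,1]^{|\V|}$ is forward invariant and both $\boldsymbol\Lambda\boldsymbol P(t)$ and $(\boldsymbol I-\boldsymbol\Lambda)\veta(t)$ are nonnegative with $\infty$‑norm at most $1$ a.s. The engine of everything is the geometric decay of the random products $\boldsymbol Q_s:=\boldsymbol\Lambda\boldsymbol P(0)\cdots\boldsymbol\Lambda\boldsymbol P(s-1)$ ($\boldsymbol Q_0:=\boldsymbol I$): these are nonnegative and sub‑stochastic, the i.i.d.\ factors have $\Exp[\boldsymbol\Lambda\boldsymbol P(t)]=\boldsymbol\Lambda\overline{\boldsymbol P}$, so independence yields $\Exp[\boldsymbol Q_s]=(\boldsymbol\Lambda\overline{\boldsymbol P})^s$ and hence $\Exp[\|\boldsymbol Q_s\|_\infty]\le\boldsymbol 1^\top(\boldsymbol\Lambda\overline{\boldsymbol P})^s\boldsymbol 1\le C\rho^s$ for some $C<\infty$, $\rho\in[0,1)$ by Schur stability; therefore $\sum_s\Exp[\|\boldsymbol Q_s\|_\infty]<\infty$, so a.s.\ $\boldsymbol Q_s\to\boldsymbol 0$ and $\sum_s\|\boldsymbol Q_s\|_\infty<\infty$, and the same bound holds verbatim for the reversed products $\boldsymbol\Lambda\boldsymbol P(t-1)\cdots\boldsymbol\Lambda\boldsymbol P(0)$.

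Items~1 and~3: since the forward chain keeps oscillating, the limit variable must be produced from the \emph{backward} iteration. Expanding, $f_0\circ\cdots\circ f_{t-1}(\vx)=\boldsymbol Q_t\vx+\sum_{s=0}^{t-1}\boldsymbol Q_s(\boldsymbol I-\boldsymbol\Lambda)\veta(s)$, whose first term vanishes a.s.\ and whose series converges absolutely a.s.\ and in $L^1$ by the decay step; call the limit $\boldsymbol Z$, which does not depend on $\vx$. A one‑step expansion $\boldsymbol Z\stackrel{d}{=}\boldsymbol\Lambda\boldsymbol P(0)\,\boldsymbol Z'+(\boldsymbol I-\boldsymbol\Lambda)\veta(0)$, with $\boldsymbol Z'\stackrel{d}{=}\boldsymbol Z$ independent of $(\boldsymbol P(0),\veta(0))$, shows $\mu:=\mathrm{Law}(\boldsymbol Z)$ is invariant for \eqref{eq:overall_dyn}. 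Conversely, for each fixed $t$, exchangeability of the i.i.d.\ maps gives $\vx(t)=f_{t-1}\circ\cdots\circ f_0(\vx(0))\stackrel{d}{=}f_0\circ\cdots\circ f_{t-1}(\vx(0))$, and the latter converges a.s.\ to $\boldsymbol Z$; hence $\vx(t)$ converges in distribution to $\boldsymbol Z$ from every $\vx(0)\in[0,1]^{|\V|}$, which both forces uniqueness of the invariant law and gives item~1 with $\vx_\infty\sim\mu$. For item~3, $\Exp[\boldsymbol Z]=\sum_{s\ge0}\Exp[\boldsymbol Q_s](\boldsymbol I-\boldsymbol\Lambda)\veta^0=(\boldsymbol I-\boldsymbol\Lambda\overline{\boldsymbol P})^{-1}(\boldsymbol I-\boldsymbol\Lambda)\veta^0=\vx^\star$ (interchange justified by the summability above, and $\Exp[\veta(s)]=\veta^0$ by Assumption~\ref{ass:noise}); alternatively one takes expectations in the ergodic limit below and matches with Proposition~\ref{prop:convergence} via dominated convergence.

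Item~2 (ergodicity): I would extend $\{f_t\}$ to a two‑sided i.i.d.\ family and set $\vx^{\mathrm s}(t):=\lim_{n\to\infty}f_{t-1}\circ\cdots\circ f_{t-n}(\vx)$, which by the decay step is a.s.\ well defined and $\vx$‑free; then $\{\vx^{\mathrm s}(t)\}_{t\in\integer}$ is stationary with marginal $\mu$, solves $\vx^{\mathrm s}(t+1)=f_t(\vx^{\mathrm s}(t))$, and, being a measurable factor of the i.i.d.\ shift, is ergodic, so Birkhoff's theorem gives $\frac1t\sum_{\ell=0}^{t-1}\vx^{\mathrm s}(\ell)\to\Exp[\vx^{\mathrm s}(0)]=\Exp[\vx_\infty]$ a.s. Since $\vx(\cdot)$ and $\vx^{\mathrm s}(\cdot)$ are two forward orbits of the same maps $f_0,f_1,\dots$, one has $\|\vx(t)-\vx^{\mathrm s}(t)\|_\infty\le\|\boldsymbol\Lambda\boldsymbol P(t-1)\cdots\boldsymbol\Lambda\boldsymbol P(0)\|_\infty\,\|\vx(0)-\vx^{\mathrm s}(0)\|_\infty$, which is a.s.\ summable; hence its Ces\`aro average vanishes and $\overline{\vx}(t)\to\Exp[\vx_\infty]$ a.s., i.e.\ the process is ergodic in the sense of Definition~\ref{def:ergo}.

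The main obstacle is conceptual, not computational: the forward process does \emph{not} converge almost surely — it persistently oscillates, which is the whole point of the model — so item~1 cannot be obtained by a naive limit and must be routed through the backward iteration to manufacture $\vx_\infty$ and the invariant distribution. The second delicate point is upgrading ``unique invariant law'' to the \emph{pathwise} ergodic statement for \emph{arbitrary} (non‑stationary) initial conditions; this is precisely where the geometric forgetting of the initial condition from the decay step, combined with Birkhoff applied to the stationary version, is needed.
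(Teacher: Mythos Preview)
Your proof is correct and self-contained. The paper, by contrast, does not reconstruct any of this machinery: it simply observes that with $\vu(t)\equiv\boldsymbol 0$ the dynamics reads $\vx(t+1)=\boldsymbol A(t)\vx(t)+\vb(t)$ with $\boldsymbol A(t)=\boldsymbol\Lambda\boldsymbol P(t)$, $\vb(t)=(\boldsymbol I-\boldsymbol\Lambda)\veta(t)$, checks that $\Exp[\boldsymbol A(t)]=\boldsymbol\Lambda\overline{\boldsymbol P}$ is Schur stable (Assumption~\ref{ass:P}) and $\Exp[\vb(t)]=(\boldsymbol I-\boldsymbol\Lambda)\veta^0$ is constant, and then invokes \cite[Theorem~1]{ravazzi2015ergodic} to obtain all three items at once. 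What you have written is essentially a proof sketch of that cited theorem specialized to the present setting: the backward-iteration construction of the invariant law, the exchangeability trick to transfer a.s.\ convergence of the backward composition into convergence in distribution of the forward chain, and the Birkhoff-plus-coupling argument for pathwise ergodicity from arbitrary initial conditions are exactly the ingredients behind the result the paper cites. Your route buys self-containment and makes transparent where Schur stability of $\boldsymbol\Lambda\overline{\boldsymbol P}$ and the i.i.d.\ structure of $(\boldsymbol P(t),\veta(t))$ are each used; the paper's route buys brevity by outsourcing all of this to the literature.
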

\begin{proof}
Let us rewrite \eqref{eq:dyn} as
\begin{align}\label{eq:dyn2}
    \boldsymbol{x}(t+1)&=\boldsymbol{A}(t)\boldsymbol{x}(t)+\boldsymbol{b}(t),
\end{align}
with $\boldsymbol{A}(t)=\boldsymbol{\Lambda}\boldsymbol{P}(t)$ and $\boldsymbol{b}(t)=(\boldsymbol{I}-\boldsymbol{\Lambda})\boldsymbol{\eta}(t)$. As we assume that the controlled input is zero, then it holds that
$\mathbb{E}[\boldsymbol{b}(t)]=(\boldsymbol{I}-\boldsymbol{\Lambda})\boldsymbol{\eta}^0
$, which is constant. From Assumption~\ref{ass:P}, we further obtain that $\mathbb{E}[\boldsymbol{A}(t)]=\boldsymbol{\Lambda}\overline{\boldsymbol{P}}$ is Shur stable. By direct application of \cite[Theorem~1]{ravazzi2015ergodic} we deduce that
$\boldsymbol x(k)$ converges in distribution to a random variable $\boldsymbol x_{\infty}$, and the distribution of $\boldsymbol x_{\infty}$ is the unique invariant distribution for \eqref{eq:dyn2}. The process \eqref{eq:dyn2} is ergodic and the limit random variable satisfies
$
\Exp[\boldsymbol x_\infty]=\boldsymbol x^{\star}.
$
\end{proof} 
The following theorem guarantees that the ergodicity of the latent opinions implies the ergodicity of the acceptance variables. 
\begin{theorem}[Ergodicity of acceptance variables]\label{thm:y}
Consider the random processes $\{\boldsymbol x(t)\}_{t\in\integernonnegative}$ defined in~\eqref{eq:dyn} and $\{\boldsymbol y(t)\}_{t\in\integernonnegative}$. Suppose that $\vu(t)=0$ for all $t \in \integernonnegative$. Then, under Assumptions~\ref{ass:noise}-\ref{ass:P}, the following
hold.
\begin{enumerate}
\item $\{\boldsymbol y(t)\}_{t\in\integernonnegative}$ converges in distribution to random variables $\boldsymbol y_{\infty}$ that is distributed as a Bernoulli distribution with parameter $\boldsymbol x_{\infty}$. 
In particular, $\Exp[\boldsymbol y_\infty]=\boldsymbol x^{\star}$.
\item The process $\{\boldsymbol y(t)\}_{t\in\integernonnegative}$ is ergodic.
\end{enumerate}
\end{theorem}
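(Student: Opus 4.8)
The plan is to reduce the ergodicity of $\{\boldsymbol y(t)\}$ to the already-established ergodicity of $\{\boldsymbol x(t)\}$ (Theorem~\ref{thm:ergodic}), exploiting the conditional structure in Assumption~\ref{ass:Y}. First I would establish the convergence in distribution. Since $\boldsymbol x(t)\to\boldsymbol x_\infty$ in distribution and $\boldsymbol y(t)$ is, conditionally on $\boldsymbol x(t)$, a vector of independent Bernoulli variables with parameters $x_v(t)$, the conditional characteristic function of $\boldsymbol y(t)$ given $\boldsymbol x(t)$ is $\prod_v\bigl(1-x_v(t)+x_v(t)\e^{\imag s_v}\bigr)$, a bounded continuous function of $\boldsymbol x(t)$. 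Taking expectations and passing to the limit via the continuous mapping theorem / bounded convergence shows that $\boldsymbol y(t)$ converges in distribution to $\boldsymbol y_\infty$, which is (conditionally on $\boldsymbol x_\infty$) Bernoulli with parameter $\boldsymbol x_\infty$. The identity $\Exp[\boldsymbol y_\infty]=\Exp[\Exp[\boldsymbol y_\infty\mid\boldsymbol x_\infty]]=\Exp[\boldsymbol x_\infty]=\boldsymbol x^\star$ then follows from Theorem~\ref{thm:ergodic}, part~3.

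For the ergodicity claim, the goal is to show $\overline{\boldsymbol y}(t)=\tfrac1t\sum_{\ell=0}^{t-1}\boldsymbol y(\ell)\to\Exp[\boldsymbol y_\infty]=\boldsymbol x^\star$ almost surely. The natural decomposition is $\overline{\boldsymbol y}(t)=\tfrac1t\sum_{\ell=0}^{t-1}\bigl(\boldsymbol y(\ell)-\boldsymbol x(\ell)\bigr)+\overline{\boldsymbol x}(t)$. By Theorem~\ref{thm:ergodic}, the second term converges a.s.\ to $\boldsymbol x^\star$, so it suffices to prove that the first term, call it $\overline{\boldsymbol m}(t)$, vanishes a.s. The key observation is that $\boldsymbol m(\ell):=\boldsymbol y(\ell)-\Exp[\boldsymbol y(\ell)\mid\boldsymbol x(\ell)]=\boldsymbol y(\ell)-\boldsymbol x(\ell)$ is, by Assumption~\ref{ass:Y}, a martingale difference sequence with respect to the filtration $\mathcal F_\ell=\sigma(\boldsymbol x(0),\dots,\boldsymbol x(\ell),\boldsymbol y(0),\dots,\boldsymbol y(\ell-1))$ — indeed $\Exp[\boldsymbol m(\ell)\mid\mathcal F_\ell]=\boldsymbol 0$ — and it is uniformly bounded (each coordinate lies in $[-1,1]$), hence has conditional variance bounded by $1$. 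I would then invoke a strong law of large numbers for bounded martingale differences (e.g.\ the $L^2$ martingale SLLN, or Azuma–Hoeffding with Borel–Cantelli): since $\sum_\ell \ell^{-2}\Exp[\|\boldsymbol m(\ell)\|^2\mid\mathcal F_\ell]<\infty$, the martingale $\sum_{\ell=1}^{t}\boldsymbol m(\ell)/\ell$ converges a.s., and Kronecker's lemma yields $\tfrac1t\sum_{\ell=0}^{t-1}\boldsymbol m(\ell)\to\boldsymbol 0$ a.s. Combining the two pieces gives $\overline{\boldsymbol y}(t)\to\boldsymbol x^\star$ a.s., which by Definition~\ref{def:ergo} (with limit variable $\boldsymbol y_\infty$, using $\Exp[\boldsymbol y_\infty]=\boldsymbol x^\star$) is exactly ergodicity of $\{\boldsymbol y(t)\}$.

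The main obstacle is handling the dependence between $\boldsymbol y(\ell)$ and the underlying opinion process correctly: one cannot treat the $\boldsymbol y(\ell)$ as i.i.d.\ or even marginally stationary (the $\boldsymbol x(\ell)$ are only asymptotically stationary, and $\boldsymbol y(\ell)$ inherits this transient), so a direct SLLN does not apply. The martingale-difference viewpoint is what makes it work, and the one point requiring a little care is the choice of filtration so that $\boldsymbol m(\ell)$ is genuinely a martingale difference — this is where Assumption~\ref{ass:Y}, which states the conditional law of $\boldsymbol y(t)$ given the \emph{current} $\boldsymbol x(t)$ only (and implicitly conditional independence across $t$ given the opinion path), is essential. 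Everything else — boundedness, the variance bound, Kronecker's lemma — is routine.

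\begin{proof}
We prove the two claims in turn.

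\emph{Convergence in distribution.} Fix $\boldsymbol s\in\real^{|\mathcal V|}$. By Assumption~\ref{ass:Y}, conditionally on $\boldsymbol x(t)$ the components $y_v(t)$ are independent Bernoulli variables with parameter $x_v(t)$, so
\begin{equation}\label{eq:charfun}
\Exp\!\left[\e^{\imag \boldsymbol s^{\top}\boldsymbol y(t)}\,\middle|\,\boldsymbol x(t)\right]=\prod_{v\in\mathcal V}\bigl(1-x_v(t)+x_v(t)\,\e^{\imag s_v}\bigr)=:\phi(\boldsymbol x(t)).
\end{equation}
The map $\phi:[0,1]^{|\mathcal V|}\to\complex$ is continuous and bounded by $1$ in modulus. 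By Theorem~\ref{thm:ergodic}, $\boldsymbol x(t)\to\boldsymbol x_\infty$ in distribution; hence, by the continuous mapping theorem and the boundedness of $\phi$,
\begin{equation*}
\Exp\!\left[\e^{\imag \boldsymbol s^{\top}\boldsymbol y(t)}\right]=\Exp\bigl[\phi(\boldsymbol x(t))\bigr]\;\longrightarrow\;\Exp\bigl[\phi(\boldsymbol x_\infty)\bigr].
\end{equation*}
The right-hand side is the characteristic function of a random vector $\boldsymbol y_\infty$ that, conditionally on $\boldsymbol x_\infty$, has independent Bernoulli($x_{\infty,v}$) components. By L\'evy's continuity theorem, $\boldsymbol y(t)\to\boldsymbol y_\infty$ in distribution. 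Moreover
\begin{equation*}
\Exp[\boldsymbol y_\infty]=\Exp\bigl[\Exp[\boldsymbol y_\infty\mid\boldsymbol x_\infty]\bigr]=\Exp[\boldsymbol x_\infty]=\boldsymbol x^{\star},
\end{equation*}
the last equality being part~3 of Theorem~\ref{thm:ergodic}.

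\emph{Ergodicity.} Let $\mathcal F_\ell:=\sigma\bigl(\boldsymbol x(0),\dots,\boldsymbol x(\ell),\,\boldsymbol y(0),\dots,\boldsymbol y(\ell-1)\bigr)$ and define $\boldsymbol m(\ell):=\boldsymbol y(\ell)-\boldsymbol x(\ell)$. By Assumption~\ref{ass:Y}, $\Exp[\boldsymbol y(\ell)\mid\mathcal F_\ell]=\Exp[\boldsymbol y(\ell)\mid\boldsymbol x(\ell)]=\boldsymbol x(\ell)$, so $\Exp[\boldsymbol m(\ell)\mid\mathcal F_\ell]=\boldsymbol 0$; thus $\{\boldsymbol m(\ell)\}$ is a martingale difference sequence adapted to $\{\mathcal F_{\ell+1}\}$. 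Since each component of $\boldsymbol m(\ell)$ takes values in $[-1,1]$, we have $\Exp[\|\boldsymbol m(\ell)\|_2^2\mid\mathcal F_\ell]\le|\mathcal V|$ for all $\ell$, hence
\begin{equation*}
\sum_{\ell=1}^{\infty}\frac{1}{\ell^{2}}\,\Exp\bigl[\|\boldsymbol m(\ell)\|_2^2\,\big|\,\mathcal F_\ell\bigr]<\infty\quad\text{a.s.}
\end{equation*}
By the martingale strong law of large numbers (equivalently, convergence of the $L^2$-bounded martingale $\sum_{\ell\le t}\boldsymbol m(\ell)/\ell$ together with Kronecker's lemma), it follows that
\begin{equation}\label{eq:mbar}
\frac{1}{t}\sum_{\ell=0}^{t-1}\boldsymbol m(\ell)\;\longrightarrow\;\boldsymbol 0\qquad\text{almost surely.}
\end{equation}
On the other hand, Theorem~\ref{thm:ergodic} gives $\overline{\boldsymbol x}(t)\to\boldsymbol x^{\star}$ a.s. Writing
\begin{equation*}
\overline{\boldsymbol y}(t)=\frac{1}{t}\sum_{\ell=0}^{t-1}\bigl(\boldsymbol y(\ell)-\boldsymbol x(\ell)\bigr)+\overline{\boldsymbol x}(t)=\frac{1}{t}\sum_{\ell=0}^{t-1}\boldsymbol m(\ell)+\overline{\boldsymbol x}(t),
\end{equation*}
and combining this with \eqref{eq:mbar}, we obtain $\overline{\boldsymbol y}(t)\to\boldsymbol x^{\star}$ almost surely. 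Since $\Exp[\boldsymbol y_\infty]=\boldsymbol x^{\star}$ by the first part, Definition~\ref{def:ergo} is satisfied with limit variable $\boldsymbol y_\infty$, i.e., $\{\boldsymbol y(t)\}_{t\in\integernonnegative}$ is ergodic.
\end{proof}
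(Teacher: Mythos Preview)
Your proof is correct. For part~1 the two arguments are essentially the same idea---you test convergence in distribution against the characteristic functions $\phi(\boldsymbol x(t))$, while the paper tests against arbitrary bounded continuous $h$ and computes $\Exp[h(y_v(t))]=h(1)\Exp[x_v(t)]+h(0)(1-\Exp[x_v(t)])$ componentwise; both reduce to the convergence in distribution of $\boldsymbol x(t)$ from Theorem~\ref{thm:ergodic}. Your vector-level treatment via the joint characteristic function is in fact slightly sharper, though it leans on the (implicit) conditional independence of the $y_v(t)$ given $\boldsymbol x(t)$, which you flag explicitly.

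For part~2 the routes genuinely diverge. You center at the \emph{random} conditional mean, writing $\overline{\boldsymbol y}(t)=\tfrac1t\sum_\ell(\boldsymbol y(\ell)-\boldsymbol x(\ell))+\overline{\boldsymbol x}(t)$, recognise $\boldsymbol y(\ell)-\boldsymbol x(\ell)$ as a bounded martingale-difference sequence, and invoke the $L^2$ martingale SLLN / Kronecker's lemma together with the almost-sure ergodicity of $\{\boldsymbol x(t)\}$. The paper instead centers at the \emph{deterministic} unconditional mean, bounding $\bigl\|\tfrac1t\sum_\ell\boldsymbol y(\ell)-\tfrac1t\sum_\ell\Exp[\boldsymbol y(\ell)]\bigr\|_1$ via a Chernoff--Hoeffding concentration inequality and Borel--Cantelli, and then uses $\Exp[\boldsymbol y(\ell)]=\Exp[\boldsymbol x(\ell)]\to\boldsymbol x^\star$ (the deterministic Ces\`aro convergence coming from Proposition~\ref{prop:convergence}). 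Your martingale argument is cleaner and makes the dependence structure transparent, and it uses the full strength of Theorem~\ref{thm:ergodic} (a.s.\ convergence of $\overline{\boldsymbol x}(t)$); the paper's approach relies only on convergence of $\Exp[\boldsymbol x(t)]$ but needs an exponential concentration bound whose direct applicability to the non-independent sequence $\{\boldsymbol y(\ell)\}$ is less immediate than your martingale SLLN.
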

\begin{proof}
It is well known that \(X_n \xrightarrow{\mathcal{D}} X\) 
(convergence in distribution) if and only if for every bounded and continuous function 
\(h: \mathbb{R} \to \mathbb{R}\), we have:
$
\mathbb{E}[h(X_n)] \to \mathbb{E}[h(X)].$
By exploiting Assumption~\ref{ass:Y}, for every bounded and continuous function $h(\cdot)$, it holds that:
\begin{align*}
\mathbb{E}[h(y_v{(t)})] &= \mathbb{E} \left[ \mathbb{E} [h(y_v(t)) | x_v{(t)}] \right]\\
 &= \mathbb{E} \left[h(1)x_v{(t)}+h(0)(1-x_v{(t)}) \right]\\
 &= h(1)\mathbb{E} \left[x_v{(t)}\right]+h(0)\mathbb{E} \left[(1-x_v{(t)}) \right].
\end{align*}
Since $ \vx{(t)}$ converges to $\vx_{\infty}$ in distribution (see Theorem~\ref{thm:ergodic}), for the properties of $h(\cdot)$ we have
\begin{align*}
\lim_{t \to \infty} \mathbb{E}[h(y_v(t))] &= h(1)  \mathbb{E}[\vx_{v}(\infty)] + h(0)  \mathbb{E}[1 - \vx_{v}(\infty)]\\
&=\mathbb{E}[h(\vy_{v}(\infty))],
\end{align*}
in turn, implying that
$
\lim_{t \to \infty} \mathbb{E}[h(\vy{(t)})] = \mathbb{E}[h(\vy_{\infty})].
$ 
From this fact we conclude that $\{\vy(t)\}_{t \in \integernonnegative}$ converges in distribution to $\vy_{\infty}$,  that is distributed as a Bernoulli distribution with parameter $\boldsymbol x_{\infty}$.

To complete the proof, it remains to show
that the process \( \{\boldsymbol y(t)\}_{t \in \integernonnegative} \), where \( \boldsymbol y(t) \sim \text{Bernoulli}(\boldsymbol x(t)) \), is also ergodic, i.e.,
\begin{align*}
\lim_{t\rightarrow\infty}\frac{1}{t} \sum_{\ell = 0}^{t - 1} \vy(\ell) = \mathbb{E}[\boldsymbol y_{\infty}] =\vx^{\star} \quad \text{almost surely}.
\end{align*}

By the Chernoff bound \cite{hoeffding1963}, for any \(\epsilon > 0\) the following holds

\[
\mathbb{P}\left( \left\| \frac{1}{t} \sum_{\ell = 0}^{t-1} \boldsymbol y(\ell) - \frac{1}{t} \sum_{\ell = 0}^{t-1} \mathbb{E}[\boldsymbol y(\ell)] \right\|_1 \geq \epsilon \right) \leq 2 \exp \left( - \frac{\epsilon^2 t}{2|\mathcal{V}|} \right).
\]
If we choose $\epsilon=t^{-1/2+\alpha}$ for any $\alpha>0$, then \[
\sum_{t=1}^{\infty} 2 \exp \left( - \frac{\epsilon_t^2 t}{2|\mathcal{V}|} \right)= \sum_{t=1}^{\infty} 2 \exp \left( - \frac{t^{2\alpha} }{2|\mathcal{V}|} \right)< \infty
\]
the
Borel-Cantelli Lemma (see \cite{Borkar1995}, Theorem 1.4.2) implies that with
probability one $\left\| \frac{1}{t} \sum_{\ell = 0}^{t-1} \boldsymbol y(\ell) - \frac{1}{t} \sum_{\ell = 0}^{t-1} \mathbb{E}[\boldsymbol y(\ell)] \right\|_1$ for all but finitely many values of $t\in\mathbb{N}$. Therefore, almost surely $\left\| \frac{1}{t} \sum_{\ell = 0}^{t-1} \boldsymbol y(\ell) - \frac{1}{t} \sum_{\ell = 0}^{t-1} \mathbb{E}[\boldsymbol y(\ell)] \right\|_1$ converges to zero as $t$ goes to infinity.
Finally, from triangular inequality and since \(\mathbb{E}[\boldsymbol y(\ell)]\!=\!\mathbb{E}[\boldsymbol x(\ell)]\) for all $\ell\in\mathbb{N}$,  we have
\begin{align*}
 \left\| \frac{1}{t} \sum_{\ell = 0}^{t-1} \boldsymbol y(\ell) - \vx^{\star} \right\|_1  &\leq
\left\| \frac{1}{t} \sum_{\ell = 0}^{t-1} \boldsymbol y(\ell) - \frac{1}{t} \sum_{\ell = 0}^{t-1} \mathbb{E}[\boldsymbol y(\ell)] \right\|_1\\
&+\left\| \frac{1}{t} \sum_{\ell = 0}^{t-1} \mathbb{E}[ \boldsymbol{x}(\ell)]-\boldsymbol{x}^{\star} \right\|_1.    
\end{align*}

Since the first term in the right-hand-side converges to zero almost surely as $t\to\infty$ from arguments above and \(\{\boldsymbol x(t)\}_{t \in \integernonnegative}\) is ergodic (see Theorem~\ref{thm:ergodic}), we have that
\[
\frac{1}{t} \sum_{\ell = 0}^{t-1} \boldsymbol y(\ell) \to \mathbb{E}[\boldsymbol x_{\infty}] \!=\! \mathbb{E}[\boldsymbol y_{\infty}] =\vx^{\star}\  \text{almost surely as } t \to \infty.
\]
and hence \(\{\boldsymbol y(t)\}_{t \in \integernonnegative}\) is ergodic, thus concluding the proof.
\end{proof}

\begin{remark}[Open-loop properties and control actions]\label{remark:constant_policy}
    The properties summarized in 
    Theorems \ref{thm:ergodic}-\ref{thm:y} are still valid if one feeds the social system with a constant control input over time, i.e., $\boldsymbol{u}(t)=\boldsymbol{u}$ for all $t\in\integernonnegative$.  
\end{remark}

Based on this consideration, in the next section, we focus on the design of constant nudging policies, which preserve the ergodicity of both the hidden and manifested inclinations and are then used as a benchmark in Section~\ref{sec:numerical} against an (optimal) time-varying one. 

\section{Asymptotic behavior with constant control action: model-free versus model-based approach}\label{sec:CCP}
We now introduce two possible constant policies that the policymaker/stakeholder can enact to nudge a target choice. The first one is a \textit{model-free}, constant control policy (MFCCP), obtained by simply dividing the overall budget $\beta$ (see \eqref{eq:budget_constr}) in a uniform way. The second is an \textit{optimal} policy designed by accounting for the asymptotic behavior of the dynamics in \eqref{eq:overall_dyn} and the fact that only the acceptance variables (see Assumption~\ref{ass:Y}) are accessible. In both cases, one would require the (generally unrealistic) knowledge of the stochastic disturbance $\boldsymbol{\eta}^{\mathrm{nc}}(t)$ to account for it by design while satisfying $\boldsymbol{x}(t) \in [0,1]$ at all $t \in \integernonnegative$. Based on Assumption~\ref{ass:noise}, we here take a (simplistic and conservative) approach to do so by imposing the designed policy to lay in the following interval
\begin{subequations}\label{eq:worst_case_bound}
\begin{equation}\label{eq:uniform_policy}
\boldsymbol{u}(t) \in [\boldsymbol{0},\boldsymbol{1}-\boldsymbol{\eta}'],   
\end{equation}
where
\begin{equation}
\eta_{v}'=\eta_{v}^{0}+2\sigma_{v},~~\forall v \in \mathcal{V},
\end{equation}
\end{subequations}
and the second term depends on the standard deviation of the external disturbance $\boldsymbol{\eta}^{\mathrm{nc}}(t)$ for all $t \in \integernonnegative$.
\begin{remark}[On the usefulness of \eqref{eq:worst_case_bound}]
    The bound in \eqref{eq:worst_case_bound} leverages the implicit assumption that $\boldsymbol{\eta}^{\mathrm{nc}}(t)$ has a light-tailed distribution, e.g., Gaussian, making it unpractical to cope with disturbances with heavy-tailed distributions, e.g., Cauchy distributed disturbances. 
\end{remark}

The most straightforward policy that could be deployed thus corresponds to a uniform distribution of the available resources (see \eqref{eq:budget_constr}), eventually saturated based on the individual worst-case bound in \eqref{eq:worst_case_bound}. This policy can be formalized as follows and we will refer to it as Model Free Constant Control Policy (MFCCP).
\begin{policy}[MFCCP]\label{MFCCP}
The MFCCP is given by 
\begin{equation}\label{eq:MFCCP}
    u_v(t)\!=\!u^{\mathsf{MF\!}}_v\!=\!\min{\left\{1-\eta_{v}',\frac{\beta}{|\V|}\right\}},~~\forall v \in \V,~\forall t\in\integernonnegative.
\end{equation}\end{policy}
However, due to its nature, a constant policy like MFCCP is \textquotedblleft blind\textquotedblright \ with respect to the imitation and influence effect due to interactions in the social network, which can instead be leveraged to nudge the target choice achieving the compromise generally sought by policymakers and stakeholders, i.e., maximizing the acceptance of a specific choice (i.e., maximizing the policy's \textit{efficiency}) while minimizing the costs of a policy. Indeed, disregarding social interactions under budget constraints may cause excessive (eventually unneeded) resource investments in some individuals that can be more \textquotedblleft easily\textquotedblright \ influenced by others, while sufficient efforts are not invested in nudging other ones that instead would need more effort on the policymaker/stakeholder side to be convinced.    

To account for social interactions in pursuing the compromise between the two opposite objectives of maximal efficiency and minimal cost, we introduce a personalized constant control policy, , which we will refer to as Model Based Constant Control Policy (MBCCP). The latter is designed by leveraging the result in Theorem~\ref{thm:y} and Remark~\ref{remark:constant_policy} to optimally achieve a trade-off between the asymptotic average acceptance of the target choice and policy effort, as follows.   
\begin{policy}[MBCCP]\label{prob:loss_inf}
The MBCCP is obtained as
\begin{subequations}\label{eq:MBCCP}
\begin{equation}
\begin{aligned}
\vu(t)=\boldsymbol{u}^{\mathsf{MB}}&=\argmin{\boldsymbol{u}}{J_{\mathsf{MB}}(\boldsymbol{u})}\\
&\qquad\quad~ \mathrm{s.t.}~ \boldsymbol{1}^{\!\top}\boldsymbol{u}\leq\beta,~\boldsymbol{u} \in [\boldsymbol{0},\boldsymbol{1}-\boldsymbol{\eta}'],
\end{aligned}
\end{equation}
for all $t \in \integernonnegative$, with
\begin{equation}\label{eq:loss_inf}
J_{\mathsf{MB}}(\boldsymbol{u}):=\mathbb{E}\left[\|\boldsymbol{1}-{\vy}_{\infty}\|_{\boldsymbol{Q}}^{2}\!+\!\|{\boldsymbol{u}}\|_{\boldsymbol{R}}^{2}\right],
\end{equation}
\end{subequations}
and $\boldsymbol{Q},\boldsymbol{R}$ being user-defined positive definite matrices.
\end{policy} 
The policy design objective in \eqref{eq:loss_inf} can further be proven to be solely a function of the policy $\vu$ to be designed and the initial individual bias $\boldsymbol{\eta_{0}}$ as formalized in the following result. 
\begin{proposition}\label{prop:CQP}
Under Assumption~\ref{ass:noise}-\ref{ass:P}, the control objective in \eqref{eq:loss_inf} can be equivalently recast as
\begin{subequations}\label{eq:reformulated_loss}
    \begin{equation}
        J_{\mathsf{MB}}({\vu})= \|\vu\|_{\boldsymbol{W}}^{2} +\boldsymbol{c}^{\top}\vu+\gamma,
    \end{equation}
    where
    \begin{align}
        & \boldsymbol{W}=\boldsymbol{R}+\boldsymbol{
        V}^{\top\!}(\boldsymbol{Q}-[\boldsymbol{Q}])\boldsymbol{
        V},\\
        & \boldsymbol{c}^{\top}=\boldsymbol{1}^{\!\top\!}([\boldsymbol{Q}]-2\boldsymbol{Q})\boldsymbol{V}+2\veta_{0}^{\top}(\boldsymbol{Q}-[\boldsymbol{Q}])\boldsymbol{V}\\
        & \gamma=\boldsymbol{1}^{\!\top\!}\boldsymbol{Q}\boldsymbol{1}\!+\!\veta_{0}^{\top}\boldsymbol{
        V}^{\top\!}(\boldsymbol{Q}\!-\![\boldsymbol{Q}])\boldsymbol{
        V}\veta_{0}\!+\!\boldsymbol{1}^{\!\top\!}([\boldsymbol{Q}]\!-\!2\boldsymbol{Q})\boldsymbol{V}\veta_{0},
    \end{align}
    with $\boldsymbol{V}=(I-\boldsymbol{\Lambda\overline{P}})^{-1}(\boldsymbol{I}-\boldsymbol{\Lambda})$.
\end{subequations}
\end{proposition}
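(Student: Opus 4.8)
The plan is to reduce the stochastic objective in \eqref{eq:loss_inf} to a deterministic quadratic form by exploiting the characterization of the limiting distribution obtained in Theorems \ref{thm:ergodic}--\ref{thm:y}. Recall that, by Remark~\ref{remark:constant_policy}, with a constant control $\boldsymbol{u}(t)=\boldsymbol{u}$ the process remains ergodic, and the limiting state $\boldsymbol{x}_\infty$ satisfies $\mathbb{E}[\boldsymbol{x}_\infty]=(\boldsymbol{I}-\boldsymbol{\Lambda}\overline{\boldsymbol{P}})^{-1}(\boldsymbol{I}-\boldsymbol{\Lambda})\mathbb{E}[\boldsymbol{\eta}(\infty)]$, where now $\mathbb{E}[\boldsymbol{\eta}(\infty)]=\boldsymbol{\eta}^0+\boldsymbol{u}$ since $\boldsymbol{\eta}^{\mathrm{nc}}$ is zero-mean (Assumption~\ref{ass:noise}). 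Hence $\mathbb{E}[\boldsymbol{x}_\infty]=\boldsymbol{V}(\boldsymbol{\eta}^0+\boldsymbol{u})$ with $\boldsymbol{V}=(\boldsymbol{I}-\boldsymbol{\Lambda}\overline{\boldsymbol{P}})^{-1}(\boldsymbol{I}-\boldsymbol{\Lambda})$. Moreover, by Theorem~\ref{thm:y}, $\boldsymbol{y}_\infty$ is conditionally Bernoulli given $\boldsymbol{x}_\infty$, so $\mathbb{E}[\boldsymbol{y}_\infty]=\mathbb{E}[\boldsymbol{x}_\infty]$ and, using conditional independence of the Bernoulli coordinates, $\mathbb{E}[\boldsymbol{y}_\infty\boldsymbol{y}_\infty^\top]=\mathbb{E}[\boldsymbol{x}_\infty\boldsymbol{x}_\infty^\top]$ off the diagonal, while on the diagonal $\mathbb{E}[y_{\infty,v}^2]=\mathbb{E}[y_{\infty,v}]=\mathbb{E}[x_{\infty,v}]$.

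Next I would expand the quadratic loss. Writing $\|\boldsymbol{1}-\boldsymbol{y}_\infty\|_{\boldsymbol{Q}}^2=\boldsymbol{1}^\top\boldsymbol{Q}\boldsymbol{1}-2\boldsymbol{1}^\top\boldsymbol{Q}\boldsymbol{y}_\infty+\boldsymbol{y}_\infty^\top\boldsymbol{Q}\boldsymbol{y}_\infty$ and taking expectations, the only nontrivial term is $\mathbb{E}[\boldsymbol{y}_\infty^\top\boldsymbol{Q}\boldsymbol{y}_\infty]=\mathrm{tr}(\boldsymbol{Q}\,\mathbb{E}[\boldsymbol{y}_\infty\boldsymbol{y}_\infty^\top])$. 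Splitting $\boldsymbol{Q}$ into its diagonal part $[\boldsymbol{Q}]$ and off-diagonal part $\boldsymbol{Q}-[\boldsymbol{Q}]$, the off-diagonal contribution equals $\mathrm{tr}((\boldsymbol{Q}-[\boldsymbol{Q}])\mathbb{E}[\boldsymbol{x}_\infty\boldsymbol{x}_\infty^\top])$, and since the diagonal of $\mathbb{E}[\boldsymbol{x}_\infty\boldsymbol{x}_\infty^\top]$ multiplied by the zero diagonal of $\boldsymbol{Q}-[\boldsymbol{Q}]$ vanishes, this further reduces to $\mathrm{tr}((\boldsymbol{Q}-[\boldsymbol{Q}])\,\mathbb{E}[\boldsymbol{x}_\infty]\mathbb{E}[\boldsymbol{x}_\infty]^\top)=\mathbb{E}[\boldsymbol{x}_\infty]^\top(\boldsymbol{Q}-[\boldsymbol{Q}])\mathbb{E}[\boldsymbol{x}_\infty]$ — crucially, the covariance of $\boldsymbol{x}_\infty$ never enters because $\boldsymbol{Q}-[\boldsymbol{Q}]$ kills the diagonal and the Bernoulli coordinates are conditionally independent. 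The diagonal contribution is $\boldsymbol{1}^\top[\boldsymbol{Q}]\,\mathbb{E}[\boldsymbol{x}_\infty]$ (again using $\mathbb{E}[y_{\infty,v}^2]=\mathbb{E}[x_{\infty,v}]$). Substituting $\mathbb{E}[\boldsymbol{x}_\infty]=\boldsymbol{V}(\boldsymbol{\eta}^0+\boldsymbol{u})$ and collecting terms by degree in $\boldsymbol{u}$ then yields the quadratic form $\|\boldsymbol{u}\|_{\boldsymbol{W}}^2+\boldsymbol{c}^\top\boldsymbol{u}+\gamma$ with $\boldsymbol{W}=\boldsymbol{R}+\boldsymbol{V}^\top(\boldsymbol{Q}-[\boldsymbol{Q}])\boldsymbol{V}$, and with $\boldsymbol{c}$ and $\gamma$ gathering the cross term $2(\boldsymbol{\eta}^0)^\top\boldsymbol{V}^\top(\boldsymbol{Q}-[\boldsymbol{Q}])\boldsymbol{V}\boldsymbol{u}$, the linear terms $-2\boldsymbol{1}^\top\boldsymbol{Q}\boldsymbol{V}\boldsymbol{u}$ and $\boldsymbol{1}^\top[\boldsymbol{Q}]\boldsymbol{V}\boldsymbol{u}$, and the $\boldsymbol{u}$-independent remainder. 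Matching these against the claimed expressions for $\boldsymbol{c}^\top$ and $\gamma$ is a bookkeeping exercise.

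The main obstacle is the careful handling of $\mathbb{E}[\boldsymbol{y}_\infty\boldsymbol{y}_\infty^\top]$: one must correctly separate diagonal from off-diagonal entries and justify that, because of the conditional Bernoulli structure established in Theorem~\ref{thm:y}, the off-diagonal entries equal $\mathbb{E}[x_{\infty,v}]\mathbb{E}[x_{\infty,w}]$ only after taking the outer expectation over $\boldsymbol{x}_\infty$ — i.e., one needs $\mathbb{E}[x_{\infty,v}x_{\infty,w}]$, not $\mathbb{E}[x_{\infty,v}]\mathbb{E}[x_{\infty,w}]$, but the weighting by the zero-diagonal matrix $\boldsymbol{Q}-[\boldsymbol{Q}]$ means only the product of marginals times the off-diagonal weights survives after a second reduction is \emph{not} automatic; rather, one keeps $\mathbb{E}[\boldsymbol{x}_\infty\boldsymbol{x}_\infty^\top]$ and observes its full second moment would contribute, unless $\boldsymbol{Q}$ is chosen so that only the mean matters. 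In fact the cleanest route is to note that $J_{\mathsf{MB}}$ as written already depends on the second moment of $\boldsymbol{x}_\infty$ through the off-diagonal of $\boldsymbol{Q}$; the Proposition's claim that it depends only on $\boldsymbol{u}$ and $\boldsymbol{\eta}^0$ should therefore be read as implicitly requiring (or the authors intend) that the relevant second-moment terms either cancel or are absorbed — so I would double-check whether an additional hypothesis (e.g. $\boldsymbol{Q}$ diagonal, or neglecting $\mathrm{Cov}(\boldsymbol{x}_\infty)$) is being used, and if so state it explicitly before carrying out the term-collection.
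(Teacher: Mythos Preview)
Your approach is essentially the paper's: condition on $\boldsymbol{x}_\infty$, use the Bernoulli structure of $\boldsymbol{y}_\infty$, substitute $\mathbb{E}[\boldsymbol{x}_\infty]=\boldsymbol{V}(\boldsymbol{\eta}^0+\boldsymbol{u})$, and collect terms. The paper's proof is even terser: it writes
\[
\mathbb{E}\big[\|\boldsymbol{1}-\boldsymbol{y}_\infty\|_{\boldsymbol{Q}}^2\big]=\mathbb{E}\big[\mathbb{E}[\|\boldsymbol{1}-\boldsymbol{y}_\infty\|_{\boldsymbol{Q}}^2\,|\,\boldsymbol{x}_\infty]\big]=\|\boldsymbol{1}-\boldsymbol{x}^\star\|_{\boldsymbol{Q}}^2+\langle\boldsymbol{x}^\star,\boldsymbol{1}-\boldsymbol{x}^\star\rangle_{[\boldsymbol{Q}]},
\]
and then plugs in $\boldsymbol{x}^\star=\boldsymbol{V}(\boldsymbol{\eta}^0+\boldsymbol{u})$.

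You are right to flag the second-moment issue; the paper's own proof makes exactly the silent replacement you worry about. The inner conditional expectation is
\[
\mathbb{E}[\|\boldsymbol{1}-\boldsymbol{y}_\infty\|_{\boldsymbol{Q}}^2\,|\,\boldsymbol{x}_\infty]=\|\boldsymbol{1}-\boldsymbol{x}_\infty\|_{\boldsymbol{Q}}^2+\langle\boldsymbol{x}_\infty,\boldsymbol{1}-\boldsymbol{x}_\infty\rangle_{[\boldsymbol{Q}]},
\]
and taking the outer expectation leaves a residual $\mathrm{tr}\big((\boldsymbol{Q}-[\boldsymbol{Q}])\,\mathrm{Cov}(\boldsymbol{x}_\infty)\big)$ relative to the claimed expression. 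This residual vanishes precisely when $\boldsymbol{Q}$ is diagonal (and indeed the numerical section uses $\boldsymbol{Q}=\boldsymbol{I}$); otherwise $\mathrm{Cov}(\boldsymbol{x}_\infty)$ is nonzero and generally $\boldsymbol{u}$-dependent because of the random switching in $\boldsymbol{P}(t)$. So your instinct to state an additional hypothesis (e.g., $\boldsymbol{Q}$ diagonal, or that covariance terms are neglected) is well founded and in fact more careful than the paper's argument. One minor correction to your write-up: your first pass claims the covariance ``never enters because $\boldsymbol{Q}-[\boldsymbol{Q}]$ kills the diagonal'', which is not right---$\boldsymbol{Q}-[\boldsymbol{Q}]$ kills the diagonal of $\mathrm{Cov}(\boldsymbol{x}_\infty)$, but the off-diagonal covariances survive; you then correctly retract this in the next paragraph, so just delete the misleading sentence.
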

\begin{proof}
Let us rewrite the first term in \eqref{eq:loss_inf} as follows:
    \begin{align*}
    \mathbb{E}[\|\boldsymbol{1}-\boldsymbol{y}_{\infty}\|_{\boldsymbol{Q}}^{2}]&=\mathbb{E}[\mathbb{E}[\|\boldsymbol{1}-\boldsymbol{y}_{\infty}\|_{\boldsymbol{Q}}^{2}\big|\boldsymbol{x}_{\infty}]\\
    &=\|\boldsymbol{1}-\vx^\star\|_{\boldsymbol{Q}}^{2}+\langle\vx^{\star},\boldsymbol{1}-\vx^{\star}\rangle_{[\boldsymbol{Q}]}
    \end{align*}
    where the second equality follows from Theorem~\ref{thm:y}. Due to the presence of the constant input $\vu$ to be optimized, similarly to \eqref{eq:equilibria}, $\vx^{\star}$ now satisfies:
    \begin{align*}
    \vx^{\star}&=\boldsymbol{\Lambda}\overline{\boldsymbol{P}}\vx^{\star}+(\boldsymbol{I}-\boldsymbol{\Lambda})(\veta_{0}+\vu)\\
        & \Rightarrow \vx^{\star}=(\boldsymbol{I}-\boldsymbol{\Lambda}\overline{\boldsymbol{P}})^{-1}(\boldsymbol{I}-\boldsymbol{\Lambda})(\veta_{0}+\vu)=\boldsymbol{V}(\veta_{0}+\vu).
    \end{align*}
Putting the expression $\vx^{\star}$ into \eqref{eq:loss_inf} and grouping quadratic terms in $\vu$, linear terms in $\vu$ and those that are instead independent of the quantity to be optimized, the result straightforwardly follows.
\end{proof}
This reformulation highlights that the MBCCP is the solution to a quadratic program (QP) with linear inequality constraints, that can be efficiently tackled with several well-known approaches, including interior point methods,
active set, augmented Lagrangian, and conjugate gradient methods \cite{boyd2004convex}, to mention just a few.

Based on the reformulation of the loss provided in Proposition~\ref{prop:CQP}, it would also possible to find explicit expressions for $\boldsymbol{u}_{\mathsf{MB}}$ via the Karush-Kuhn-Tucker (KKT) conditions associated with \eqref{eq:MBCCP}, i.e.,
\begin{subequations}
    \begin{align}
        &2 \boldsymbol{W}\boldsymbol{u}^{\mathsf{MB}}+c+\nu_{1}\boldsymbol{1}-\boldsymbol{\nu_{2}}+\boldsymbol{\nu_{3}}=0,\label{eq:conditionMB1}\\
        & \nu_{1} \!\geq\! 0,~~\nu_{2,v} \!\geq\! 0,~~\nu_{3,v} \!\geq\! 0,~~~\forall v \in \{1,\ldots,|\mathcal{V}|\},\label{eq:conditionMB2}\\
        & \nu_{1}(\boldsymbol{1}^{\!\top\!}\boldsymbol{u}^{\mathsf{MB}}\!-\!\beta)\!-\!\!\sum_{v=1}^{|\mathcal{V}|}\nu_{2,v}u_{v}^{\mathsf{MB}}\!+\!\!\sum_{v=1}^{|\mathcal{V}|}\nu_{2,v}(u_{v}^{\mathsf{MB}}\!-\!1+\eta_{v}')
        =0,\label{eq:conditionMB3}\\
        & \boldsymbol{1}^{\top}\boldsymbol{u}^{\mathsf{MB}}-\beta \leq 0,~~ -\boldsymbol{u}^{\mathsf{MB}}\leq \mathbf{0},~~ \boldsymbol{u}^{\mathsf{MB}}\leq \boldsymbol{1}-\boldsymbol{\eta}',\label{eq:conditionMB4}
\end{align}
\end{subequations}
where $\nu_{1}$ is the Lagrange multiplier associated with the budget constraint, while $\boldsymbol{\nu_{2}}$ and $\boldsymbol{\nu_{3}}$ stack the multipliers associated with the lower and upper bounds on each component of $\boldsymbol{u}^{\mathsf{MB}}$ dictated by \eqref{eq:worst_case_bound}, respectively. Accordingly, there are several possible scenarios for the (constant) value ultimately taken by $\boldsymbol{u}_{\mathsf{MB}}$, depending on the combination of active constraints induced by the user-defined matrices $\boldsymbol{Q}$ and $\boldsymbol{R}$ (see \eqref{eq:loss_inf}), as well as the available budget $\beta$. 

\begin{remark}[MBCCP with inactive constraints]
Whenever the constraints are not active (i.e., the conditions in \eqref{eq:conditionMB4} are strict inequalities), the Lagrange multipliers have to be zero for \eqref{eq:conditionMB2} and \eqref{eq:conditionMB3} to be satisfied. This result ultimately leads to $\boldsymbol{u}^{\mathsf{MB}}=-\frac{1}{2}\boldsymbol{W}^{-1}\boldsymbol{c}$. Given a fixed budget $\beta$, the previous result implies the following condition
\begin{equation}
    -\frac{1}{2}\begin{bmatrix}
        \boldsymbol{1}^{\top}\\
        -\boldsymbol{I}\\
        \boldsymbol{I}
    \end{bmatrix}\boldsymbol{W}(\boldsymbol{Q},\boldsymbol{R})^{-1}\boldsymbol{c}(\boldsymbol{Q})<\begin{bmatrix}
        \beta\\
        \boldsymbol{0}\\
        \boldsymbol{1}-\boldsymbol{\eta}'
    \end{bmatrix},
\end{equation}
where we have explicitly highlighted the dependence of the weights $\boldsymbol{W}$ and $\boldsymbol{c}$ on the user-defined ones, i.e., $\boldsymbol{Q}$ and $\boldsymbol{R}$ (see \eqref{eq:loss_inf}). Such a condition can hence be checked by the policymaker/stakeholder to understand whether the model-based constant policy they design allows them $(i)$ not to consume the overall budget, $(ii)$ to nudge everyone, and $(iii)$ to be careful with respect to external disturbances.   
\end{remark}

As already pointed out in Remark~\ref{remark:constant_policy}, whether the policymaker decides to adopt the MFCCP in \eqref{eq:MFCCP} or the MBCCP, the stability in expectation of the opinion dynamics, ergodicity of both the hidden inclination and the manifest variables, as well as the convergence in the mean square sense of the dynamics straightforwardly follow with minor adjustments (i.e., replacing $\veta_0$ with $\veta_0+\vu$) from Theorems~\ref{thm:ergodic}-\ref{thm:y}.
\section{Adaptive policy design: a model predictive approach}\label{sec:control}
While the MBCCP strategy given by \eqref{prob:loss_inf} already seeks an optimal trade-off between the efficiency and costs of a policy, it does so by restrictively looking at achieving an efficiency-cost balance in the long run and a steady-state scenario, neglecting the intertwined evolution of opinions and policies over time. We now make a step forward in overcoming this limitation, formulating the following policy design problem explicitly optimizing the efficiency-cost trade-off accounting for how the acceptance of the target choice evolves over time under the influence of the nudging policy.
\begin{policy}[Infinite-horizon policy with quadratic cost]\label{policy:infinite}
    Let the policymaker/stakeholder aiming to maximize the average realizations of unitary acceptance variables $\vy(t)$ while minimizing the policy effort $\boldsymbol{u}(t)$, for all $t \in\integernonnegative$. A possible optimal policy to achieve this objective can be designed by solving
    \begin{equation}\label{eq:infinite_hor_opt}
    \begin{aligned} &\min_{\boldsymbol{U}_{\infty}} ~J(\boldsymbol{U}_{\infty})\\
    &~~ \mathrm{s.t.~} \eqref{eq:overall_dyn} \mathrm{~and~}\eqref{eq:acceptance_assump},~~\forall t \in \integernonnegative,\\
    &\qquad~ \boldsymbol{u}(t) \in [\boldsymbol{0},\boldsymbol{1}-\boldsymbol{\eta}'],~~\forall t \in \integernonnegative,\\ 
    & \qquad~  \boldsymbol{1}^{\top}\boldsymbol{u}(t)\leq \beta, ~~ \forall t \in \integernonnegative,
    \end{aligned}
\end{equation}
where $\boldsymbol{U}_{\infty}\!=\!\{\boldsymbol{u}(t)\}_{t\in\integernonnegative}$ comprises all the policy actions to be designed over time, and the infinite-horizon loss $J(\boldsymbol{U}_{\infty})$ is given by:
\begin{equation}\label{eq:infinite_cost1}
J(\boldsymbol{U}_{\infty})=\sum_{t=0}^{+\infty} \left[\mathbb{E}\|\boldsymbol{1}-{\vy}(t)\|_{\boldsymbol{Q}}^{2}+\|{\boldsymbol{u}}(t)\|_{\boldsymbol{R}}^{2}\right],
\end{equation}
where $\boldsymbol{Q}$ and $\boldsymbol{R}$ are positive definite matrices shaping the trade-off between efficiency and control effort. 
\end{policy}
Note that this infinite policy formulation looks at maximizing manifested acceptance, i.e., the only information that can be accessed by the policymaker. Nonetheless, under Assumption~\ref{ass:Y}, this quantity is still closely related to hidden inclinations variables $\vx(t)$. 
\begin{remark}[Relation with MBCCP]
    The MBCCP (see Policy~\ref{prob:loss_inf}) corresponds to an infinite horizon policy with quadratic cost (similar to Policy~\ref{policy:infinite}) where the transient in the objective function is neglected. 
\end{remark}

As done for the MBCCP in Section~\ref{sec:CCP}, we first focus on the cost $J(\boldsymbol{U}_{\infty})$ and analyze its properties. To this end, let us introduce the following compact notation for the expected value of the hidden inclination variables: 
\begin{equation}\label{eq:average}
    \boldsymbol{\mu}(t)=\mathbb{E}[\vx(t)].
\end{equation}
By exploiting this formalism, we can introduce the formalize the following result
\begin{proposition}[Equivalent cost]\label{prop:loss1}
 Under Assumption~\ref{ass:Y}, the cost function in \eqref{eq:infinite_cost1} is equivalent to
 \begin{equation}\label{eq:loss_in_mu}
    J_{\infty\!}(\boldsymbol{U}_{\infty})\!=\!\!\!\sum_{t=0}^{+\infty}\!\left[\|\boldsymbol{1}\!\!-\!\!\boldsymbol{\mu}(t)\|_{\boldsymbol{Q}\!}^{2}+\!\|\boldsymbol{u}(t)\|_{\boldsymbol{R}}^{2}+\!\langle\boldsymbol{\mu}(t),\!\boldsymbol{1}\!\!-\!\!\boldsymbol{\mu}(t)\rangle_{[\boldsymbol{Q}]}\right]\!.
 \end{equation}
\end{proposition}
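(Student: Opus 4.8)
The plan is to condition on $\vx(t)$ and expand the expectation term by term, exactly as in the proof of Proposition~\ref{prop:CQP}, where a term of the form $\mathbb{E}\|\boldsymbol{1}-\vy_\infty\|_{\boldsymbol{Q}}^2$ was already handled. The only nontrivial part of the cost in \eqref{eq:infinite_cost1} is the first summand $\mathbb{E}\|\boldsymbol{1}-\vy(t)\|_{\boldsymbol{Q}}^2$, since the second summand $\|\boldsymbol{u}(t)\|_{\boldsymbol{R}}^2$ is deterministic and carries over unchanged.

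First I would write, using the tower property and Assumption~\ref{ass:Y},
\begin{equation*}
\mathbb{E}\|\boldsymbol{1}-\vy(t)\|_{\boldsymbol{Q}}^2=\mathbb{E}\big[\mathbb{E}[\|\boldsymbol{1}-\vy(t)\|_{\boldsymbol{Q}}^2\mid\vx(t)]\big].
\end{equation*}
Expanding the quadratic form, $\|\boldsymbol{1}-\vy(t)\|_{\boldsymbol{Q}}^2=\sum_{v,w}Q_{vw}(1-y_v(t))(1-y_w(t))$. For the off-diagonal terms $v\neq w$, conditional independence of $y_v(t)$ and $y_w(t)$ given $\vx(t)$ (implicit in Assumption~\ref{ass:Y}, since the acceptance variables are drawn independently across agents) gives $\mathbb{E}[(1-y_v(t))(1-y_w(t))\mid\vx(t)]=(1-x_v(t))(1-x_w(t))$. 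For the diagonal terms $v=w$, since $y_v(t)\in\{0,1\}$ we have $(1-y_v(t))^2=1-y_v(t)$, hence $\mathbb{E}[(1-y_v(t))^2\mid\vx(t)]=1-x_v(t)$, which differs from $(1-x_v(t))^2$ by exactly $x_v(t)(1-x_v(t))$. Collecting these, $\mathbb{E}[\|\boldsymbol{1}-\vy(t)\|_{\boldsymbol{Q}}^2\mid\vx(t)]=\|\boldsymbol{1}-\vx(t)\|_{\boldsymbol{Q}}^2+\langle\vx(t),\boldsymbol{1}-\vx(t)\rangle_{[\boldsymbol{Q}]}$, which is the conditional identity already invoked (for $t=\infty$) in Proposition~\ref{prop:CQP}.

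Finally I would take the outer expectation. The term $\langle\vx(t),\boldsymbol{1}-\vx(t)\rangle_{[\boldsymbol{Q}]}=\sum_v Q_{vv}x_v(t)(1-x_v(t))$ is linear in each $x_v(t)$ up to the quadratic piece; but here I must be slightly careful: $\mathbb{E}\|\boldsymbol{1}-\vx(t)\|_{\boldsymbol{Q}}^2\neq\|\boldsymbol{1}-\boldsymbol{\mu}(t)\|_{\boldsymbol{Q}}^2$ and $\mathbb{E}[x_v(t)(1-x_v(t))]\neq\mu_v(t)(1-\mu_v(t))$ in general because of variance terms. The resolution is that the two discrepancies cancel: writing $x_v(t)=\mu_v(t)+\delta_v(t)$ with $\mathbb{E}[\delta_v(t)]=0$, the cross/variance contributions from $\mathbb{E}\|\boldsymbol{1}-\vx(t)\|_{\boldsymbol{Q}}^2$ and from $\mathbb{E}\langle\vx(t),\boldsymbol{1}-\vx(t)\rangle_{[\boldsymbol{Q}]}$ appear with opposite signs on the diagonal — indeed $\mathbb{E}[(1-x_v)^2]+\mathbb{E}[x_v(1-x_v)]=\mathbb{E}[1-x_v]=1-\mu_v$ pointwise, mirroring the diagonal computation above but now at the level of the outer expectation. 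Hence $\mathbb{E}\big[\|\boldsymbol{1}-\vx(t)\|_{\boldsymbol{Q}}^2+\langle\vx(t),\boldsymbol{1}-\vx(t)\rangle_{[\boldsymbol{Q}]}\big]=\|\boldsymbol{1}-\boldsymbol{\mu}(t)\|_{\boldsymbol{Q}}^2+\langle\boldsymbol{\mu}(t),\boldsymbol{1}-\boldsymbol{\mu}(t)\rangle_{[\boldsymbol{Q}]}$, where on the off-diagonal of $\boldsymbol{Q}$ the identity $\mathbb{E}[(1-x_v)(1-x_w)]=(1-\mu_v)(1-\mu_w)$ would require independence of $x_v(t)$ and $x_w(t)$, which does \emph{not} hold — so the cleanest route is actually to stop at the conditional identity and take the expectation of the $\vx(t)$-level expression only after observing that the \emph{combined} quantity $\|\boldsymbol{1}-\vx(t)\|_{\boldsymbol{Q}}^2+\langle\vx(t),\boldsymbol{1}-\vx(t)\rangle_{[\boldsymbol{Q}]}$ equals $\boldsymbol{1}^\top\boldsymbol{Q}\boldsymbol{1}-2\boldsymbol{1}^\top\boldsymbol{Q}\vx(t)+\vx(t)^\top([\boldsymbol{Q}]+\boldsymbol{Q}-\boldsymbol{Q})\vx(t)$ — wait, the quadratic parts are $\vx^\top\boldsymbol{Q}\vx$ from the first and $\vx^\top[\boldsymbol{Q}]\vx-\vx^\top[\boldsymbol{Q}]\vx$... let me just say: after expansion the genuinely quadratic-in-$\vx(t)$ contribution is $\vx(t)^\top\boldsymbol{Q}\vx(t)+\langle\vx(t),-\vx(t)\rangle_{[\boldsymbol{Q}]}=\vx(t)^\top(\boldsymbol{Q}-[\boldsymbol{Q}])\vx(t)$, whose off-diagonal entries still block a naive swap of $\mathbb{E}$ and the quadratic. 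The honest statement is therefore that Proposition~\ref{prop:loss1} as written implicitly identifies $\mathbb{E}\|\boldsymbol{1}-\vy(t)\|_{\boldsymbol{Q}}^2$ with the stated expression under the convention that one substitutes $\boldsymbol{\mu}(t)=\mathbb{E}[\vx(t)]$ and treats the resulting map on means as the object of interest; I would make this precise by noting that the dynamics \eqref{eq:dyn_ev} propagate $\boldsymbol{\mu}(t)$ linearly and deterministically, so the cost, viewed as a function of the controls through $\{\boldsymbol{\mu}(t)\}$, is exactly \eqref{eq:loss_in_mu} — the remaining state-covariance terms being control-independent and hence irrelevant for the optimization.

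The main obstacle, then, is not the conditional computation (which is a one-line expansion using $y_v^2=y_v$ and cross-agent independence) but correctly accounting for the fact that the outer expectation over $\vx(t)$ does not factor through the quadratic form; the cleanest fix is to carry the exact conditional expression, take expectations, and then observe that the difference between $\mathbb{E}[g(\vx(t))]$ and $g(\boldsymbol{\mu}(t))$ for the relevant quadratic $g$ consists only of second-moment terms of $\vx(t)$ that do not depend on the policy $\boldsymbol{U}_\infty$ (the covariance of $\vx(t)$ is driven solely by the i.i.d. noise $\boldsymbol{\eta}^{\mathrm{nc}}$ and $\boldsymbol{P}(t)$, not by $\boldsymbol{u}$), so they can be dropped from the objective without changing the minimizer — which is the sense in which \eqref{eq:infinite_cost1} and \eqref{eq:loss_in_mu} are ``equivalent'' for the purposes of Policy~\ref{policy:infinite}.
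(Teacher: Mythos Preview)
Your conditional computation is exactly the paper's route: condition on $\vx(t)$, use $(1-y_v)^2=1-y_v$ on the diagonal and cross-agent conditional independence off the diagonal to get
\[
\mathbb{E}\big[\|\boldsymbol{1}-\vy(t)\|_{\boldsymbol{Q}}^{2}\,\big|\,\vx(t)\big]=\|\boldsymbol{1}-\vx(t)\|_{\boldsymbol{Q}}^{2}+\langle\vx(t),\boldsymbol{1}-\vx(t)\rangle_{[\boldsymbol{Q}]}.
\]
The paper then writes the outer expectation directly as $\|\boldsymbol{1}-\boldsymbol{\mu}(t)\|_{\boldsymbol{Q}}^{2}+\langle\boldsymbol{\mu}(t),\boldsymbol{1}-\boldsymbol{\mu}(t)\rangle_{[\boldsymbol{Q}]}$ without further comment; in other words, it does not address the covariance issue you flag but simply substitutes $\vx(t)\mapsto\boldsymbol{\mu}(t)$. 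So on the mechanics you and the paper agree, and the subtlety you spotted is real and is glossed over in the paper's own argument (it is harmless when $\boldsymbol{Q}=[\boldsymbol{Q}]$, since then the combined expression is linear in $\vx(t)$).

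Where your proposal actually fails is in the resolution you offer. The claim that the second moments of $\vx(t)$ are control-independent is incorrect. From \eqref{eq:dyn}, the deviation $\vx(t+1)-\boldsymbol{\mu}(t+1)$ contains the term $\boldsymbol{\Lambda}\big(\boldsymbol{P}(t)-\overline{\boldsymbol{P}}\big)\boldsymbol{\mu}(t)$, and since $\boldsymbol{P}(t)$ is the random switching matrix in \eqref{eq:P_varying}, this term has covariance $\alpha(1-\alpha)\,\boldsymbol{\Lambda}(\boldsymbol{P}-\boldsymbol{I})\boldsymbol{\mu}(t)\boldsymbol{\mu}(t)^{\top}(\boldsymbol{P}-\boldsymbol{I})^{\top}\boldsymbol{\Lambda}$, which depends on $\boldsymbol{\mu}(t)$ and hence on $\boldsymbol{U}_\infty$. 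Consequently the residual $\mathbb{E}\big[\vx(t)^{\top}(\boldsymbol{Q}-[\boldsymbol{Q}])\vx(t)\big]-\boldsymbol{\mu}(t)^{\top}(\boldsymbol{Q}-[\boldsymbol{Q}])\boldsymbol{\mu}(t)$ is \emph{not} a policy-independent constant that can be dropped from the objective. Your scruple about the outer expectation is well placed; your patch does not hold.
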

\begin{proof}
    Let us focus on the first term of the cost in \eqref{eq:infinite_cost1} only. The latter can be recast as
    \begin{equation*}
        \mathbb{E}[\|\boldsymbol{1}-\boldsymbol{y}(t)\|_{\boldsymbol{Q}}^{2}]= \mathbb{E}[\mathbb{E}[\|\boldsymbol{1}-\boldsymbol{y}(t)\|_{\boldsymbol{Q}}^{2}\big|\boldsymbol{x}(t)]].
    \end{equation*}
    Thanks to Assumption~\ref{ass:Y}, standard computations allow us to rewrite this term further as a function of $\boldsymbol{\mu}(t)$ in \eqref{eq:average} as follows:
    \begin{align*}
        &\mathbb{E}[\mathbb{E}[\|\boldsymbol{1}-\boldsymbol{y}(t)\|_{\boldsymbol{Q}}^{2}\big|\boldsymbol{x}(t)]]=\|\boldsymbol{1}-\boldsymbol{\mu}(t)\|_{\boldsymbol{Q}}^{2}+\\
        &\quad -\sum_{v \in \V}Q_{vv}(1-\mu_v(t))^{2}\!+\!\!\sum_{v \in \V}Q_{vv}\mathbb{E}[\mathbb{E}[(1-y_v(t))^2]\big|\boldsymbol{x}(t)]\\
        &\quad=\|\boldsymbol{1}-\boldsymbol{\mu}(t)\|_{\boldsymbol{Q}}^{2}-\sum_{v \in \V}\!Q_{vv}(1-\mu_v(t))^{2}\\
        &\quad+\sum_{v \in \V}Q_{vv}\mathbb{E}[1-\mu_v(t)]\\
        &\quad=\|\boldsymbol{1}-\boldsymbol{\mu}(t)\|_{\boldsymbol{Q}}^{2}+\langle\boldsymbol{\mu}(t),\boldsymbol{1}-\boldsymbol{\mu}(t)\rangle_{[\boldsymbol{Q}]}.
    \end{align*}
     By replacing with this last expression the first term in \eqref{eq:infinite_cost1}, we straightforwardly get \eqref{eq:loss_in_mu} and conclude the proof.
\end{proof}
As shown in Proposition~\ref{prop:loss1}, the cost in \eqref{eq:infinite_cost1} ultimately consists of three terms. The first is a quadratic function in the expected values of the agents' inclinations, the second is a quadratic form in the control input, and the third term allows us to equivalently recast the expected quadratic cost in a quadratic loss of expected values. 

The reformulation of the cost in \eqref{eq:loss_in_mu} further enables us to confine the infinite horizon cost in \eqref{eq:infinite_cost1} within an interval, as summarized below.
\begin{lemma}[Bound on the infinite-horizon cost]\label{lemma:bound_cost}
    For any $\boldsymbol{U}_{\infty}$ satisfying the constraints in \eqref{eq:infinite_hor_opt}, the loss in \eqref{eq:infinite_cost1} satisfies
\begin{subequations}
     \begin{equation}
         \underline{J}_{\infty}(\boldsymbol{U}_{\infty})\leq J_{\infty}(\boldsymbol{U}_{\infty})\leq \bar{J}_{\infty}(\boldsymbol{U}_{\infty}),
     \end{equation}  where
     \begin{align}
         & \underline{J}_{\infty}(\boldsymbol{U}_{\infty})= \sum_{t=0}^{+\infty}\!\left[\|\boldsymbol{1}\!-\!\boldsymbol{\mu}(t)\|_{\boldsymbol{Q}\!}^{2}+\!\|\boldsymbol{u}(t)\|_{\boldsymbol{R}}^{2}-\|\boldsymbol{1}\|_{[\boldsymbol{Q}]}^{2}\right]\!,\label{eq:lower_bound_cost}\\
         & \bar{J}_{\infty}(\boldsymbol{U}_{\infty})=\sum_{t=0}^{+\infty}\!\left[\|\boldsymbol{1}\!-\!\boldsymbol{\mu}(t)\|_{\boldsymbol{Q}\!}^{2}+\!\|\boldsymbol{u}(t)\|_{\boldsymbol{R}}^{2}+\|\boldsymbol{1}\|_{[\boldsymbol{Q}]}^{2}\right]\!.\label{eq:upper_bound_cost}
     \end{align}
\end{subequations}
\end{lemma}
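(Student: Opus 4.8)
The plan is to reduce everything to the reformulation \eqref{eq:loss_in_mu} obtained in Proposition~\ref{prop:loss1}: once $J_{\infty}(\boldsymbol{U}_{\infty})$ is written with the extra cross term $\langle\boldsymbol{\mu}(t),\boldsymbol{1}-\boldsymbol{\mu}(t)\rangle_{[\boldsymbol{Q}]}$, the only difference between $J_{\infty}$, $\underline{J}_{\infty}$ and $\bar{J}_{\infty}$ is that this cross term is replaced, respectively, by the constants $-\|\boldsymbol{1}\|_{[\boldsymbol{Q}]}^2$ and $+\|\boldsymbol{1}\|_{[\boldsymbol{Q}]}^2$. Hence it suffices to prove the pointwise-in-$t$ sandwich
\[
-\|\boldsymbol{1}\|_{[\boldsymbol{Q}]}^2 \;\le\; \langle\boldsymbol{\mu}(t),\boldsymbol{1}-\boldsymbol{\mu}(t)\rangle_{[\boldsymbol{Q}]} \;\le\; \|\boldsymbol{1}\|_{[\boldsymbol{Q}]}^2 ,\qquad \forall t\in\integernonnegative,
\]
and then add the three resulting term-by-term inequalities in \eqref{eq:loss_in_mu} and sum over $t\in\integernonnegative$; the inequalities survive the (possibly divergent) summation since every summand of $\|\boldsymbol{1}-\boldsymbol{\mu}(t)\|_{\boldsymbol{Q}}^2+\|\boldsymbol{u}(t)\|_{\boldsymbol{R}}^2$ is already nonnegative.

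To establish the sandwich I would first observe that every admissible trajectory remains in the unit hypercube. Indeed $\boldsymbol{x}(0)\in[0,1]^{|\mathcal{V}|}$, the feasibility constraint $\boldsymbol{u}(t)\in[\boldsymbol{0},\boldsymbol{1}-\boldsymbol{\eta}']$ together with Assumption~\ref{ass:noise} keeps $\boldsymbol{\eta}(t)=\boldsymbol{\eta}^0+\boldsymbol{\eta}^{\mathrm{nc}}(t)+\boldsymbol{u}(t)\in[0,1]^{|\mathcal{V}|}$, and each row of $\boldsymbol{\Lambda}\boldsymbol{P}(t)$ sums to $\lambda_v\le 1$ with complementary weight $1-\lambda_v$ multiplying $\eta_v(t)$; thus the update \eqref{eq:dyn} expresses $x_v(t+1)$ as a convex combination of entries of $\boldsymbol{x}(t)$ and of $\eta_v(t)$, all in $[0,1]$. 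By induction $\boldsymbol{x}(t)\in[0,1]^{|\mathcal{V}|}$ for every $t$, whence $\boldsymbol{\mu}(t)=\mathbb{E}[\boldsymbol{x}(t)]\in[0,1]^{|\mathcal{V}|}$.

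Writing the cross term explicitly as $\langle\boldsymbol{\mu}(t),\boldsymbol{1}-\boldsymbol{\mu}(t)\rangle_{[\boldsymbol{Q}]}=\sum_{v\in\mathcal{V}}Q_{vv}\,\mu_v(t)\bigl(1-\mu_v(t)\bigr)$, I would then use that $\boldsymbol{Q}$ is positive definite, so $Q_{vv}>0$, and that the scalar map $s\mapsto s(1-s)$ obeys $0\le s(1-s)\le \tfrac14\le 1$ on $[0,1]$. Therefore $0\le \sum_{v}Q_{vv}\,\mu_v(t)(1-\mu_v(t))\le \sum_{v}Q_{vv}=\boldsymbol{1}^{\top}[\boldsymbol{Q}]\boldsymbol{1}=\|\boldsymbol{1}\|_{[\boldsymbol{Q}]}^2$, which is exactly the desired two-sided bound (the lower bound $-\|\boldsymbol{1}\|_{[\boldsymbol{Q}]}^2$ is comfortably looser than the sharp lower bound $0$). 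Substituting these bounds for the cross term in \eqref{eq:loss_in_mu} and summing over $t$ yields \eqref{eq:lower_bound_cost} and \eqref{eq:upper_bound_cost}.

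The argument is elementary; the only point requiring genuine care is the invariance of $[0,1]^{|\mathcal{V}|}$ under the random dynamics — in particular verifying that $\boldsymbol{\eta}(t)$ stays in $[0,1]^{|\mathcal{V}|}$ given the admissible control set and the sign/magnitude conditions implicitly placed on $\boldsymbol{\eta}^0$ and $\boldsymbol{\eta}^{\mathrm{nc}}(t)$ — together with the harmless remark that the term-by-term inequalities lift to the (possibly infinite) series.
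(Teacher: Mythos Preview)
Your proposal is correct and follows essentially the same route as the paper: reduce to the cross term $\langle\boldsymbol{\mu}(t),\boldsymbol{1}-\boldsymbol{\mu}(t)\rangle_{[\boldsymbol{Q}]}$ via Proposition~\ref{prop:loss1}, use $\boldsymbol{\mu}(t)\in[0,1]^{|\mathcal{V}|}$ together with $Q_{vv}>0$ to sandwich it, and sum over $t$. The paper bounds the cross term by splitting it as $\boldsymbol{\mu}(t)^{\top}[\boldsymbol{Q}]\boldsymbol{1}-\|\boldsymbol{\mu}(t)\|_{[\boldsymbol{Q}]}^{2}$ and estimating each piece, whereas your componentwise argument via $s(1-s)\in[0,\tfrac14]$ is slightly more direct (and in fact yields the sharper interval $[0,\tfrac14\|\boldsymbol{1}\|_{[\boldsymbol{Q}]}^{2}]$); you are also more explicit than the paper about why the hypercube is invariant under the dynamics.
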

\begin{proof}
   Let us focus on the last term in \eqref{eq:loss_in_mu}, verifying
   \begin{equation*}
    \langle\boldsymbol{\mu}(t),\boldsymbol{1}\!-\!\boldsymbol{\mu}(t)\rangle_{[\boldsymbol{Q}]}=(\boldsymbol{\mu}(t))'[\boldsymbol{Q}]\boldsymbol{1}-\|\boldsymbol{\mu}(t)\|_{[\boldsymbol{Q}]}^{2},
   \end{equation*}
   by definition, with $[\boldsymbol{Q}]$ having all non-negative entries since $\boldsymbol{Q}$ is positive definite. Since $\boldsymbol{U}_{\infty}$ satisfies the constraints imposed in \eqref{eq:infinite_hor_opt}, then $\boldsymbol{\mu}(t)\in [0,1]^{|\V|}$ for all $t \in \integernonnegative$. These two results imply that 
   \begin{equation}\label{eq:useful_bound}
      0 \leq (\boldsymbol{\mu}(t))'[\mathbf{Q}]\boldsymbol{1}\leq \|\boldsymbol{1}\|_{[\mathbf{Q}]}^{2},~~\forall t \in \integernonnegative, 
   \end{equation} 
    and thus, the following upper-bound holds:
   \begin{equation*}
    \langle\boldsymbol{\mu}(t),\boldsymbol{1}\!-\!\boldsymbol{\mu}(t)\rangle_{[\boldsymbol{Q}]} \leq (\boldsymbol{\mu}(t))'[\boldsymbol{Q}]\boldsymbol{1} \leq \|\boldsymbol{1}\|_{[\boldsymbol{Q}]}^{2},~~ \forall t \in \integer.
   \end{equation*} 
   The bound in \eqref{eq:upper_bound_cost} straightforwardly follows. Meanwhile, the third term in \eqref{eq:loss_in_mu} also satisfies 
   \begin{equation*}
    \langle \boldsymbol{\mu}(t),\boldsymbol{1}\!-\!\boldsymbol{\mu}(t)\rangle_{[\boldsymbol{Q}]} \geq -\|\boldsymbol{\mu}\|_{[\boldsymbol{Q}]}^{2} \geq  -\|\boldsymbol{1}\|_{[\boldsymbol{Q}]}^{2},~~ \forall t \in \mathbb{Z}_{\geq 0}, 
   \end{equation*}
    where the first inequality stems from the lower-bound in \eqref{eq:useful_bound} while the last one holds thanks to the bounds on $\boldsymbol{\mu}(t)$ guaranteed by the assumption on $\boldsymbol{U}_{\infty}$. The bound in \eqref{eq:lower_bound_cost} easily follows, ending the proof.  
\end{proof}
This result shows that the cost function in \eqref{eq:loss_in_mu} can be both lower and upper bounded by a quadratic cost in $\{\boldsymbol{\mu}(t)\}_{t \in \integernonnegative}$ and $\boldsymbol{U}_{\infty}$ except for a constant, that changes only in sign depending on whether we consider the lower or the upper bound. 

While ideally guaranteeing an optimal efficiency-cost trade-off (according to \eqref{eq:infinite_cost1}) over an infinite horizon, Policy~\ref{policy:infinite} is knowingly intractable from a computational perspective even for small networks, because it entails the solution of an optimization problem with infinite constraints. Moreover, it might not truly fit the policymakers' and stakeholders' needs. Indeed, practical policies are likely never enacted for an infinite period of time, but rather redesigned at times based on societal and technological changes (e.g., substantial technological advancements). Therefore, we now shift from infinite-horizon to finite-horizon policy design. 

\begin{remark}[Shift to finite-horizon policies]\label{remark:shift}
    If $\boldsymbol{Q}=[\boldsymbol{Q}]$, Proposition~\ref{prop:loss1} implies that the cost to be optimized is quadratic in $\boldsymbol{U}_{\infty}$ but only linear in $\{\boldsymbol{\mu}(t)\}_{t \in \integernonnegative}$. To avoid this different treatment of the controlled input and expected inclinations, we approximate the cost by considering only the quadratic terms in $\{\boldsymbol{\mu}(t)\}_{t \in \integernonnegative}$ and $\boldsymbol{U}_{\infty}$ common to \eqref{eq:lower_bound_cost} and \eqref{eq:upper_bound_cost} in going from an infinite-horizon to a finite-horizon policy.
\end{remark}

\subsection{Oracle finite-horizon control policy}
We now step away from the impractical, infinite-horizon Policy~\ref{policy:infinite} to a practical finite-horizon one, that still accounts that the individual inclinations are time-varying and exhibit an oscillating behavior. To this end, we initially assume that an \textit{oracle} provides the policymaker/stakeholder with the expected values $\boldsymbol{\mu}(t)$ of the hidden inclinations at all time instant $t \in \integernonnegative$. Under this assumption, by exploiting the result in Proposition~\ref{prop:convergence}, we cast the following finite-horizon policy design problem.

\begin{policy}[Oracle finite-horizon policy]\label{Policy_MPC}
Given a finite horizon $T \in \integernonnegative$, with $T>1$, the input $\vu(t)$ to be fed to the social system to nudge individuals toward a target choice at time $t \in \integernonnegative$ can be designed by solving the following constrained QP:
\begin{equation}\label{eq:conservative_MPC}
    \begin{aligned}
        & \min_{\boldsymbol{U_{|t}}} J_{T}(\boldsymbol{U_{|t}})\\
        &~~\mathrm{s.t.~}
        \boldsymbol{\mu}(k\!+\!\!1|t)\!=\!\boldsymbol{\Lambda}\overline{\boldsymbol{P}}\boldsymbol{\mu}(k|t)\!+\!(\boldsymbol{I}\!-\!\!\boldsymbol{\Lambda})(\boldsymbol{\eta}_{0}\!+\!\boldsymbol{u}(k|t)),~\forall k \!\in\! \mathcal{I}_{T},\\
        & \qquad~ \boldsymbol{u}(k|t) \in [\boldsymbol{0},\boldsymbol{1}-\boldsymbol{\eta}'],~~~\forall k \in \mathcal{I}_{T},\\
        & \qquad~ \boldsymbol{1}^{\!\top}\boldsymbol{u}(k|t)\leq \beta,~~\forall k \in \mathcal{I}_{T}],\\
        & \qquad~ \boldsymbol{\mu}(0|t)=\boldsymbol{\mu}(t),\\
        &\qquad~ \boldsymbol{\mu}(T|t)=(\boldsymbol{I}-\boldsymbol{\Lambda}\overline{\boldsymbol{P}})^{-1}(\boldsymbol{I}-\boldsymbol{\Lambda})(\boldsymbol{\eta}_0+\boldsymbol{u}^{\mathsf{MB}}),
    \end{aligned}
\end{equation}
where $\mathcal{I}_{T}=[0,T-1]$, $\boldsymbol{u}^{\mathsf{MB}}$ is the solution to \eqref{eq:MBCCP}, $\boldsymbol{\mu}(k|t)$ is the vector of expected inclinations predicted according to the first set of constraints starting from the initial condition dictated by $\boldsymbol{\mu}(t)$, and $\boldsymbol{U_{|t}}=\{\boldsymbol{u}(k|t)\}_{k=0}^{T-1}$ is the set of inputs to be optimized. The cost $J_{T}(\boldsymbol{U_{|t}})$ to be minimized is instead 
\begin{equation}\label{eq:cost_1}
    J_{T}(\boldsymbol{U_{|t}})=\sum_{k=0}^{T-1} \left[\|\boldsymbol{1}-\boldsymbol{\mu}(k|t)\|_{\boldsymbol{Q}}^{2}+\|\boldsymbol{u}(k|t)\|_{\boldsymbol{R}}^{2}\right],
\end{equation}
according to Remark~\ref{remark:shift}.
\end{policy}
In principle, solving \eqref{eq:conservative_MPC} results in a sequence of $T$ inputs $\boldsymbol{U_{|t}}^{\star}$ that can be used to nudge individual over the user-defined finite horizon of length $T$. Nonetheless, enacting the policy in a recording horizon fashion, i.e., by setting $\vu(t)=\vu^{\star}(0|t)$ and discarding the rest of $\boldsymbol{U_{|t}}^{\star}$, is advisable given the external uncontrollable factors temporarily influencing individual choices and our simplistic approach to account for them. 

For the receding horizon oracle policy, we can prove the following formal result.
\begin{theorem}[Oracle policy's properties]
Let the initial expected inclination given by the oracle be $\boldsymbol{\mu}(0) \in [0,1]^{\mathcal{V}}$ and assume that the optimal control problem in \eqref{eq:conservative_MPC} is feasible at time $t =0$. Then, the following properties hold for all positive definite $\boldsymbol{Q}$ and $\boldsymbol{R}$.
\begin{itemize}
    \item[$i)$] The optimal control problem in \eqref{eq:conservative_MPC} is recursively feasible $\forall t \geq 0$.
    \item[$ii)$] The expected inclinations asymptotically converge to the steady-state \(\boldsymbol{\mu}^{\mathsf{MB}}=(\boldsymbol{I} - \boldsymbol{\Lambda}\overline{\boldsymbol{P}})^{-1}(\boldsymbol{I} - \boldsymbol{\Lambda})(\boldsymbol{\eta}_0 + \boldsymbol{u}^{\mathsf{MB}}) \), with $\boldsymbol{u}^{\mathsf{MB}}$ being the solution of \eqref{eq:MBCCP}.
\end{itemize}
\end{theorem}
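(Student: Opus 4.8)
The plan is to treat \eqref{eq:conservative_MPC} as a standard tracking MPC problem with a terminal equality constraint, where the target steady state is the pair $(\boldsymbol{\mu}^{\mathsf{MB}},\boldsymbol{u}^{\mathsf{MB}})$. The key observation is that $\boldsymbol{\mu}^{\mathsf{MB}}=(\boldsymbol{I}-\boldsymbol{\Lambda}\overline{\boldsymbol{P}})^{-1}(\boldsymbol{I}-\boldsymbol{\Lambda})(\boldsymbol{\eta}_0+\boldsymbol{u}^{\mathsf{MB}})$ is by construction an equilibrium of the prediction dynamics $\boldsymbol{\mu}(k+1|t)=\boldsymbol{\Lambda}\overline{\boldsymbol{P}}\boldsymbol{\mu}(k|t)+(\boldsymbol{I}-\boldsymbol{\Lambda})(\boldsymbol{\eta}_0+\boldsymbol{u}(k|t))$ when the constant input $\boldsymbol{u}^{\mathsf{MB}}$ is applied, and that $\boldsymbol{u}^{\mathsf{MB}}$ is feasible for the pointwise constraints $\boldsymbol{u}\in[\boldsymbol{0},\boldsymbol{1}-\boldsymbol{\eta}']$ and $\boldsymbol{1}^{\top}\boldsymbol{u}\le\beta$ since it solves \eqref{eq:MBCCP}. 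This makes the constant sequence at the steady state a valid ``tail'' to append when shifting a feasible solution forward in time.

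\emph{Recursive feasibility ($i$).} I would argue by induction on $t$. Suppose $\boldsymbol{U}_{|t}^{\star}=\{\boldsymbol{u}^{\star}(0|t),\dots,\boldsymbol{u}^{\star}(T-1|t)\}$ is optimal (hence feasible) at time $t$, producing a predicted trajectory that satisfies $\boldsymbol{\mu}^{\star}(T|t)=\boldsymbol{\mu}^{\mathsf{MB}}$. After applying $\boldsymbol{u}(t)=\boldsymbol{u}^{\star}(0|t)$, the realized expected inclination at $t+1$ equals the predicted $\boldsymbol{\mu}^{\star}(1|t)$ (the oracle supplies $\boldsymbol{\mu}(t+1)=\boldsymbol{\mu}^{\star}(1|t)$, since the expected-value recursion is exactly the prediction model used — this is where Proposition~\ref{prop:convergence} enters). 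I then propose the shifted candidate $\boldsymbol{U}_{|t+1}=\{\boldsymbol{u}^{\star}(1|t),\dots,\boldsymbol{u}^{\star}(T-1|t),\boldsymbol{u}^{\mathsf{MB}}\}$. Its first $T-1$ predicted states coincide with $\boldsymbol{\mu}^{\star}(2|t),\dots,\boldsymbol{\mu}^{\star}(T|t)=\boldsymbol{\mu}^{\mathsf{MB}}$, which satisfy all pointwise constraints; the last step applies $\boldsymbol{u}^{\mathsf{MB}}$ at the equilibrium $\boldsymbol{\mu}^{\mathsf{MB}}$, which leaves the state at $\boldsymbol{\mu}^{\mathsf{MB}}$, so the terminal constraint $\boldsymbol{\mu}(T|t+1)=\boldsymbol{\mu}^{\mathsf{MB}}$ holds. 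All input and budget constraints are met because each input in the shifted list was feasible before and $\boldsymbol{u}^{\mathsf{MB}}$ is feasible by \eqref{eq:MBCCP}. Hence \eqref{eq:conservative_MPC} is feasible at $t+1$, closing the induction from the assumed feasibility at $t=0$.

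\emph{Convergence ($ii$).} I would use the optimal value function $V(t):=J_T(\boldsymbol{U}_{|t}^{\star})$ as a Lyapunov function, after re-centering the stage cost at the steady state. Shift coordinates via $\tilde{\boldsymbol{\mu}}=\boldsymbol{\mu}-\boldsymbol{\mu}^{\mathsf{MB}}$, $\tilde{\boldsymbol{u}}=\boldsymbol{u}-\boldsymbol{u}^{\mathsf{MB}}$; since the stage cost $\|\boldsymbol{1}-\boldsymbol{\mu}\|_{\boldsymbol{Q}}^2+\|\boldsymbol{u}\|_{\boldsymbol{R}}^2$ is not minimized at the steady state, I would instead invoke the standard argument for MPC with terminal equality constraints and possibly non-vanishing stage cost: evaluating $V(t+1)$ at the shifted candidate gives
\begin{equation*}
V(t+1)\le V(t)-\ell\big(\boldsymbol{\mu}^{\star}(0|t),\boldsymbol{u}^{\star}(0|t)\big)+\ell\big(\boldsymbol{\mu}^{\mathsf{MB}},\boldsymbol{u}^{\mathsf{MB}}\big),
\end{equation*}
where $\ell(\boldsymbol{\mu},\boldsymbol{u})=\|\boldsymbol{1}-\boldsymbol{\mu}\|_{\boldsymbol{Q}}^2+\|\boldsymbol{u}\|_{\boldsymbol{R}}^2$. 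Because $\boldsymbol{u}^{\mathsf{MB}}$ minimizes the steady-state cost over all feasible equilibria (it is the MBCCP optimizer), $\ell(\boldsymbol{\mu}^{\mathsf{MB}},\boldsymbol{u}^{\mathsf{MB}})\le\ell(\boldsymbol{\mu}^{\star}(k|t),\boldsymbol{u}^{\star}(k|t))$ cannot be asserted pointwise, so I would instead pass to the rotated cost $\tilde{\ell}(\boldsymbol{\mu},\boldsymbol{u})=\ell(\boldsymbol{\mu},\boldsymbol{u})-\ell(\boldsymbol{\mu}^{\mathsf{MB}},\boldsymbol{u}^{\mathsf{MB}})-\langle\boldsymbol{\nabla},(\boldsymbol{\mu},\boldsymbol{u})-(\boldsymbol{\mu}^{\mathsf{MB}},\boldsymbol{u}^{\mathsf{MB}})\rangle$ built from the KKT multipliers of \eqref{eq:MBCCP}; along closed-loop trajectories the gradient term telescopes against the dynamics, $\tilde{\ell}\ge 0$ with equality only at the steady state by strict convexity ($\boldsymbol{Q},\boldsymbol{R}\succ0$), and the rotated value function is non-increasing. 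Summability of $\tilde{\ell}(\boldsymbol{\mu}(t),\boldsymbol{u}(t))$ then forces $\boldsymbol{\mu}(t)\to\boldsymbol{\mu}^{\mathsf{MB}}$ and $\boldsymbol{u}(t)\to\boldsymbol{u}^{\mathsf{MB}}$.

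\emph{Main obstacle.} The delicate point is part ($ii$): the naive Lyapunov decrease fails because the stage cost does not vanish at $(\boldsymbol{\mu}^{\mathsf{MB}},\boldsymbol{u}^{\mathsf{MB}})$ — the policymaker pays a nonzero steady-state cost. Making the dissipativity/cost-rotation argument rigorous requires exhibiting the storage function explicitly (via the KKT multipliers $\nu_1,\boldsymbol{\nu}_2,\boldsymbol{\nu}_3$ of \eqref{eq:MBCCP}) and checking that the terminal-constraint trick still yields a valid decrease after rotation; one must also verify that the shifted candidate respects the terminal equality constraint exactly, which hinges on $\boldsymbol{\mu}^{\mathsf{MB}}$ being a genuine fixed point of the prediction model under $\boldsymbol{u}^{\mathsf{MB}}$ — true here by the very definition in Policy~\ref{Policy_MPC}. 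A cleaner alternative I would fall back on, if the full economic-MPC machinery is deemed too heavy, is to note that with a terminal equality constraint the optimal value $V(t)$ is already bounded and the shifted-candidate bound gives $V(t+1)-V(t)\le \tilde{\ell}(\boldsymbol{\mu}^{\mathsf{MB}},\boldsymbol{u}^{\mathsf{MB}}) - \tilde{\ell}(\boldsymbol{\mu}(t),\boldsymbol{u}(t))$ after rotation, and conclude by a standard telescoping/Barbalat-type argument.
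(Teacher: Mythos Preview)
Your part ($i$) is exactly the paper's argument: shift the optimal sequence, append $\boldsymbol{u}^{\mathsf{MB}}$, and use that $(\boldsymbol{\mu}^{\mathsf{MB}},\boldsymbol{u}^{\mathsf{MB}})$ is a fixed point of the prediction model so the terminal constraint is met again.

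For part ($ii$) you take a different and more careful route than the paper. The paper does \emph{not} rotate the cost or invoke dissipativity: it simply expands $\ell(\boldsymbol{\mu},\boldsymbol{u})=\|\boldsymbol{1}-\boldsymbol{\mu}\|_{\boldsymbol{Q}}^{2}+\|\boldsymbol{u}\|_{\boldsymbol{R}}^{2}$ around $(\boldsymbol{\mu}^{\mathsf{MB}},\boldsymbol{u}^{\mathsf{MB}})$ and asserts the pointwise lower bound $\ell(\boldsymbol{\mu},\boldsymbol{u})\ge\ell(\boldsymbol{\mu}^{\mathsf{MB}},\boldsymbol{u}^{\mathsf{MB}})$, which turns the shifted-candidate estimate directly into $J_{T|t+1}^{\star}\le J_{T|t}^{\star}$ and finishes by monotonicity and boundedness of the optimal value. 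You rightly identify this inequality as the delicate step --- indeed, the paper's justification (``the first two terms are nonnegative'') leaves the sign-indefinite cross terms $2(\boldsymbol{1}-\boldsymbol{\mu}^{\mathsf{MB}})^{\top}\boldsymbol{Q}(\boldsymbol{\mu}^{\mathsf{MB}}-\boldsymbol{\mu})+2(\boldsymbol{u}^{\mathsf{MB}})^{\top}\boldsymbol{R}(\boldsymbol{u}-\boldsymbol{u}^{\mathsf{MB}})$ unaccounted for --- and your rotated-cost/economic-MPC machinery is the standard way to make the decrease rigorous. The trade-off: the paper's route is a two-line Lyapunov argument (modulo that gap), while yours requires building the storage from the KKT multipliers of \eqref{eq:MBCCP}. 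One point you still need to verify for the rotation to go through: strict dissipativity needs $(\boldsymbol{\mu}^{\mathsf{MB}},\boldsymbol{u}^{\mathsf{MB}})$ to be the optimal steady state for the \emph{MPC} stage cost $\|\boldsymbol{1}-\boldsymbol{V}(\boldsymbol{\eta}_{0}+\boldsymbol{u})\|_{\boldsymbol{Q}}^{2}+\|\boldsymbol{u}\|_{\boldsymbol{R}}^{2}$, whereas MBCCP minimizes \eqref{eq:loss_inf}, which by the proof of Proposition~\ref{prop:CQP} differs from the former by the variance term $\langle\boldsymbol{x}^{\star},\boldsymbol{1}-\boldsymbol{x}^{\star}\rangle_{[\boldsymbol{Q}]}$; you should either check that your linear storage absorbs this mismatch or note the extra assumption under which the two steady-state optimizers coincide.
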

\begin{proof}
We first prove recursive feasibility (i.e., $i)$). To this end, assume that \eqref{eq:conservative_MPC} is feasible at $t \in \integernonnegative$ and denote the associated optimal control sequence and the predicted expected inclinations as
\begin{align*}
&\boldsymbol{U_{|t}}^\star = \{\boldsymbol{u}^\star(0|t), \boldsymbol{u}^\star(1|t), \ldots, \boldsymbol{u}^\star(T-1|t)\}, \\
&\boldsymbol{M_{|t}}^\star = \left\{\boldsymbol{\mu}^\star(0|t), \boldsymbol{\mu}^\star(1|t), \ldots, \boldsymbol{\mu}^\star(T-1|t),{\boldsymbol{\mu}}^{\mathsf{MB}} \right\}, 
 \end{align*}
respectively, where the last term in $\boldsymbol{M_{|t}}^\star$ is induced by the terminal constraint. Let us construct the candidate solution for the instance of \eqref{eq:conservative_MPC} at time $t+1$ as
\begin{equation}\label{eq:candidate_input}
    \boldsymbol{U_{|t+1}}' = \left\{\boldsymbol{u}^\star(1|t), \boldsymbol{u}^\star(2|t), \ldots, \boldsymbol{u}^\star(T-1|t),{\boldsymbol{u}}^{\mathsf{MB}}\right\},
\end{equation}
where ${\boldsymbol{u}}^{\mathsf{MB}}$ is the solution of \eqref{eq:MBCCP}, satisfying both the box constraint ${\boldsymbol{u}}^{\mathsf{MB}} \in [0,\boldsymbol{1}-\veta']$ and the budget constraint by construction. As the other elements of $\boldsymbol{U_{|t+1}}'$ correspond to the last $T-1$ elements of $\boldsymbol{U_{|t}}^\star$, also satisfying both the budget constraint and the box constraints, the candidate input sequence is feasible. Let us now define the expected inclination sequence associated with $\boldsymbol{U_{|t+1}}'$, i.e.,    
\begin{equation}\label{eq:candidate_state}
    \boldsymbol{M_{|t+1}}' = \left\{\boldsymbol{\mu}^\star(1|t), \boldsymbol{\mu}^\star(2|t), \ldots, {\boldsymbol{\mu}}^{\mathsf{MB}},{\boldsymbol{\mu}}'(T|t+1)\right\},
\end{equation}
where the first $T-1$ elements are a feasible trajectory by construction (as they are the last $T$ elements of $\boldsymbol{M_{|t}}^{\star}$). Meanwhile, since the ${\boldsymbol{u}}'(T-1|t+1)={\boldsymbol{u}}^{\mathsf{MB}}$ by construction and ${\boldsymbol{\mu}}'(T-1|t+1)={\boldsymbol{\mu}}^{\mathsf{MB}}$, then $\boldsymbol{\mu}'(T|t+1)={\boldsymbol{\mu}}^{\mathsf{MB}}$ thanks to the stationarity of the pair $({\boldsymbol{\mu}}^{\mathsf{MB}},{\boldsymbol{u}}^{\mathsf{MB}})$. Therefore, $\boldsymbol{M_{|t+1}}'$ is a feasible expected inclination sequence. Since this candidate can be defined for all $t \in \integernonnegative$, a feasible solution for \eqref{eq:conservative_MPC} exists $\forall t \in \integernonnegative$ and property $i)$ easily follows.

We now focus on property $ii)$. To prove it, let us equivalently rewrite the cost in \eqref{eq:conservative_MPC} as follows
\begin{equation*}
    J_{T}(\boldsymbol{U_{|t}})=\sum_{k=0}^{T-1} \ell(\boldsymbol{\mu}(k|t),{\boldsymbol{\mu}}^{\mathsf{MB}},\boldsymbol{u}(k|t),{\boldsymbol{u}}^{\mathsf{MB}}),
\end{equation*}
where
\begin{align*}
      &\ell(\boldsymbol{\mu}(k|t),{\boldsymbol{\mu}}^{\mathsf{MB}},\boldsymbol{u}(k|t),{\boldsymbol{u}}^{\mathsf{MB}})\!=\!\|\boldsymbol{\varepsilon_{\mu}}(k|t)\!-\!\boldsymbol{\varepsilon_{\mu}}^{\mathsf{MB}}\|_{\boldsymbol{Q}}^{2}\!+\!\|\boldsymbol{\varepsilon_{u}}\|_{\boldsymbol{R}}^{2}\\
    &\qquad \quad+2(\boldsymbol{\varepsilon_{\mu}}^{\mathsf{MB}})^{\top}\boldsymbol{Q}(\boldsymbol{\varepsilon_{\mu}}(k|t)\!-\!\boldsymbol{\varepsilon_{\mu}}^{\mathsf{MB}})+2(\boldsymbol{u}^{\mathsf{MB}})^{\top}\boldsymbol{R}\boldsymbol{\varepsilon_{u}}(k|t)\\
    &\qquad \quad +\|\boldsymbol{\varepsilon_{\mu}}^{\mathsf{MB}}\|_{\boldsymbol{Q}}^{2}\!+\!\|\boldsymbol{u}^{\mathsf{MB}}\|_{\boldsymbol{R}}^{2}\geq 0,
\end{align*}
with $\boldsymbol{\varepsilon_{\mu}}(k|t)\!=\!\boldsymbol{1}\!-\boldsymbol{\mu}(k|t)$, $\boldsymbol{\varepsilon_{\mu}}^{\mathsf{MB}}\!=\!\boldsymbol{1}\!-\!{\boldsymbol{\mu}}^{\mathsf{MB}}$ and $\boldsymbol{\varepsilon_{u}}(k|t)\!=\!\boldsymbol{u}(k|t)\!-\!{\boldsymbol{u}}^{\mathsf{MB}}$. Note that, $\ell(\boldsymbol{\mu}(k|t),{\boldsymbol{\mu}}^{\mathsf{MB}},\boldsymbol{u}(k|t),{\boldsymbol{u}}^{\mathsf{MB}})$ satisfies the following
\begin{align}\label{eq:useful_inequality}
\nonumber \ell(\boldsymbol{\mu}(k|t),{\boldsymbol{\mu}}^{\mathsf{MB}},\boldsymbol{u}(k|t),{\boldsymbol{u}}^{\mathsf{MB}})&\geq \ell({\boldsymbol{\mu}}^{\mathsf{MB}},{\boldsymbol{\mu}}^{\mathsf{MB}},{\boldsymbol{u}}^{\mathsf{MB}},{\boldsymbol{u}}^{\mathsf{MB}})\\
&=\|\boldsymbol{\varepsilon_{\mu}}^{\mathsf{MB}}\|_{\boldsymbol{Q}}^{2}\!+\!\|\boldsymbol{u}^{\mathsf{MB}}\|_{\boldsymbol{R}}^{2},
\end{align}
for all $k \in \mathcal{I}_{T}$, since the first two terms are always non-negative and become zero (i.e., assume their minimal value) only when $\boldsymbol{\varepsilon_{\mu}}(k|t)=\boldsymbol{\varepsilon_{\mu}}^{\mathsf{MB}}$. Moreover, let us compactly denote the cost associated with the optimal solution at a given time instant $t$ and the one associated with the candidate solution as $J_{T|t}^{\star}=J_{T}(\boldsymbol{U_{|t}}^{\star})$ and $J_{T|t}'=J_{T}(\boldsymbol{U_{|t}}')$, respectively. Note that, by optimality, it is straightforward that $J_{T|t}^{\star} \leq J_{T|t}'$ at all $t \in \integernonnegative$.

By considering the candidates in \eqref{eq:candidate_input} and \eqref{eq:candidate_state}, note that the following holds:
\begin{align*}
    J_{T|t+1\!}'&\!=\!J_{T|t}^{\star\!}\!-\!\ell(\boldsymbol{\mu}(0|t),\!{\boldsymbol{\mu}}^{\mathsf{MB}\!},\boldsymbol{u}(0|t),\!{\boldsymbol{u}}^{\mathsf{MB}})\!+\!\|\boldsymbol{\varepsilon_{\mu}}^{\mathsf{MB}}\|_{\boldsymbol{Q}}^{2}\!+\!\|\boldsymbol{u}^{\mathsf{MB}}\|_{\boldsymbol{R}}^{2}\\
    &\leq J_{T|t}^{\star}
\end{align*}
where the inequality follows from \eqref{eq:useful_inequality}. Accordingly, 
\begin{equation*}
    J_{T|t+1}^{\star} \leq J_{T|t+1}' \leq J_{T|t}^{\star},
\end{equation*}
and hence 
\begin{equation*}
    J_{T|t+1}^{\star} - J_{T|t}^{\star} \leq 0,
\end{equation*}
i.e., the optimal cost of \eqref{eq:conservative_MPC} is non-increasing. As a consequence, there exists a non-negative scalar $J_{T|\infty}$ such that $\lim_{t \rightarrow \infty} J_{T|t}=J_{T|\infty}$. Therefore,  
\begin{align*}
    J_{T|t+1}^{\star} - J_{T|t}^{\star} \geq &\ell(\boldsymbol{\mu}(0|t),{\boldsymbol{\mu}}^{\mathsf{MB}},\boldsymbol{u}(0|t),{\boldsymbol{u}}^{\mathsf{MB}})\\
    &\qquad \quad -\|\boldsymbol{\varepsilon_{\mu}}^{\mathsf{MB}}\|_{\boldsymbol{Q}}^{2}-\|\boldsymbol{u}^{\mathsf{MB}}\|_{\boldsymbol{R}}^{2}\geq 0,
\end{align*}
where the first term tends to zero for $t \rightarrow \infty$, implying that $\lim_{t \rightarrow \infty} \boldsymbol{\mu}(t)={\boldsymbol{\mu}}^{\mathsf{MB}}$ and $\lim_{t \rightarrow \infty} \boldsymbol{u}(t)={\boldsymbol{u}}^{\mathsf{MB}}$, thus concluding the proof.
\end{proof}
\begin{remark}
As well-known~(see, e.g.,\cite{Bemporad2000}), the explicit solution to \eqref{eq:conservative_MPC} is a piecewise affine (PWA) control law on the expected inclination, with the values of its gains depending on the active constraints at each time instant. Nonetheless, based on the KKT conditions for QP, one can also determine (if they are well-defined) the conditions on the weights $\boldsymbol{Q}$ and $\boldsymbol{R}$ guaranteeing the constraints in \eqref{eq:conservative_MPC} are always inactive (i.e., the bound characterizing them are never hit). In this case, the control law becomes a static feedback on the average inclinations. In this simpler scenario, our future research will be devoted to a formal analysis of the closed-loop properties of the resulting controlled system.
\end{remark}

\subsection{Policy design with estimated mean inclinations}
All previous policy design approaches rely on the exact knowledge of the average inclination $\boldsymbol{\mu}(t)$, $\forall t \in \integernonnegative$, which is generally not directly accessible to the policymaker. Indeed (see Assumption~\ref{ass:Y}), policymakers and stakeholders can generally access the acceptance variables $\vy(t)$, for $t \in \integernonnegative$. The latter can be used to define a practical estimate of the mean inclination over time as 
\begin{equation}\label{eq:estimate}
    \hat{\boldsymbol{\mu}}(t)=\frac{1}{t}\sum_{\tau=0}^{t-1} \vy(\tau).
\end{equation}
Despite the simplicity of this estimator, if the ergodicity of the system is preserved under the designed control action, the Ces\'aro time-average \eqref{eq:estimate} converges to their expected value, namely
\begin{equation}
    \mathbb{E} [\|\hat{\boldsymbol{\mu}}(t)-\boldsymbol{\mu}(t)\|_2^2]\rightarrow0.
\end{equation}
We can therefore substitute the oracle predictive control problem in \eqref{eq:conservative_MPC} with the following:
\begin{equation}\label{eq:conservative_MPC2}
    \begin{aligned}
        & \min_{\boldsymbol{U_{|t}}} J_{T}(\boldsymbol{U_{|t}})\\
        &~~\mathrm{s.t.~}
        \boldsymbol{\mu}(k\!+\!\!1|t)\!=\!\boldsymbol{\Lambda}\overline{\boldsymbol{P}}\boldsymbol{\mu}(k|t)\!+\!(\boldsymbol{I}\!-\!\!\boldsymbol{\Lambda})(\boldsymbol{\eta}_{0}\!+\!\boldsymbol{u}(k|t)),~\forall k \!\in\! \mathcal{I}_{T},\\
        & \qquad~ \boldsymbol{u}(k|t) \in [\boldsymbol{0},\boldsymbol{1}-\boldsymbol{\eta}'],~~~\forall k \in \mathcal{I}_{T},\\
        & \qquad~ \boldsymbol{1}^{\!\top}\boldsymbol{u}(k|t)\leq \beta,~~\forall k \in \mathcal{I}_{T},\\
        & \qquad~ \boldsymbol{\mu}(0|t)=\hat{\boldsymbol{\mu}}(t),\\
        &\qquad~ \boldsymbol{\mu}(T|t)=(\boldsymbol{I}-\boldsymbol{\Lambda}\overline{\boldsymbol{P}})^{-1}(\boldsymbol{I}-\boldsymbol{\Lambda})(\boldsymbol{\eta}_0+\boldsymbol{u}^{\mathsf{MB}}),
    \end{aligned}
\end{equation}
i.e., replace the \textquotedblleft true\textquotedblright \ yet unknown average inclinations with their estimates, computed as in \eqref{eq:estimate}.
\section{Numerical Results}\label{sec:numerical}
\begin{figure}[!tb]
  \centering
\includegraphics[scale=.4,trim=4cm 3cm 3cm 2cm,clip]{./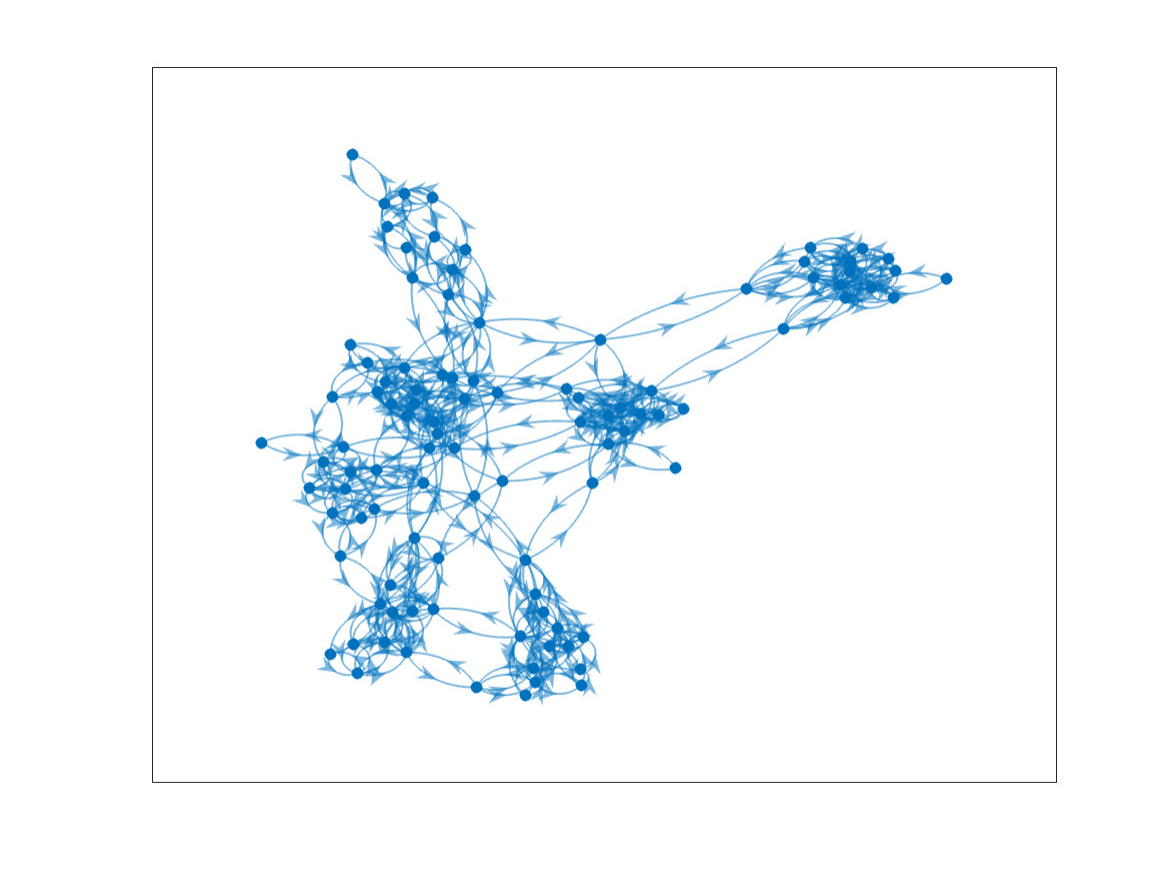} 
  \caption{The social network with $100$ nodes considered in the numerical example. }
  \label{fig:example}
\end{figure}
To illustrate the free evolution of the model proposed in Section~\ref{sec:model}, as well as the effect of the control policies proposed in Sections~\ref{sec:CCP} and \ref{sec:control}, let us consider the randomly generated\footnote{We set the density of the edges $|\mathcal{E}|/|\mathcal{V}|^2$ to 0.05 and the probability of connections between nodes in different clusters to $\gamma=0.9$.} modular graph consisting of $|\mathcal{V}|=100$ nodes,  grouped in 7 clusters depicted in \figurename{~\ref{fig:example}}, representing our social network. This set of nodes is split into two groups. One (smaller) group $\mathcal{V}_+\subset \mathcal{V}$ of $10$ nodes is favorably inclined toward a target choice of interest, having $x_v(0)=0.7$, for all $v\in\mathcal{V}_+$. The remaining 90\% of nodes are instead negatively inclined with respect to the target choice, with each of these 90 nodes having an initial condition sampled at random from a uniform distribution over the interval $[0,0.1]$. Note that this implies that the considered population is mainly negatively biased with respect to the target choice.

We test the evolution of inclinations within this population according to the model proposed in Section~\ref{sec:model} in open and closed loop, in this second scenario adopting (and comparing) policies designed according to the strategies proposed in Sections~\ref{sec:CCP}-\ref{sec:control}. In both cases, we represent external and uncontrollable factors affecting individual inclinations as realizations of a zero-mean Gaussian distributed sequence with a standard deviation of 0.1.

With respect to the parameters in \eqref{eq:overall_dyn}, the entries of the diagonal matrix $\boldsymbol{\Lambda}$ are drawn from a uniform distribution over the interval $[0, 1]$. The time-scale separation parameter $\alpha$ is instead chosen differently depending on whether we consider the free or the controlled evolution of \eqref{eq:overall_dyn}. While in the first case, if fixed to a (low) value, in the second one, it is varied to assess the impact of different time-scale separation parameters when closing the loop with the policies introduced in Sections~\ref{sec:CCP}-\ref{sec:control}.

\begin{figure}[!tb]
\centering
\begin{tabular}{cc}
\subfigure[Hidden inclinations \textit{vs} time]{\includegraphics[width=0.475\columnwidth]{./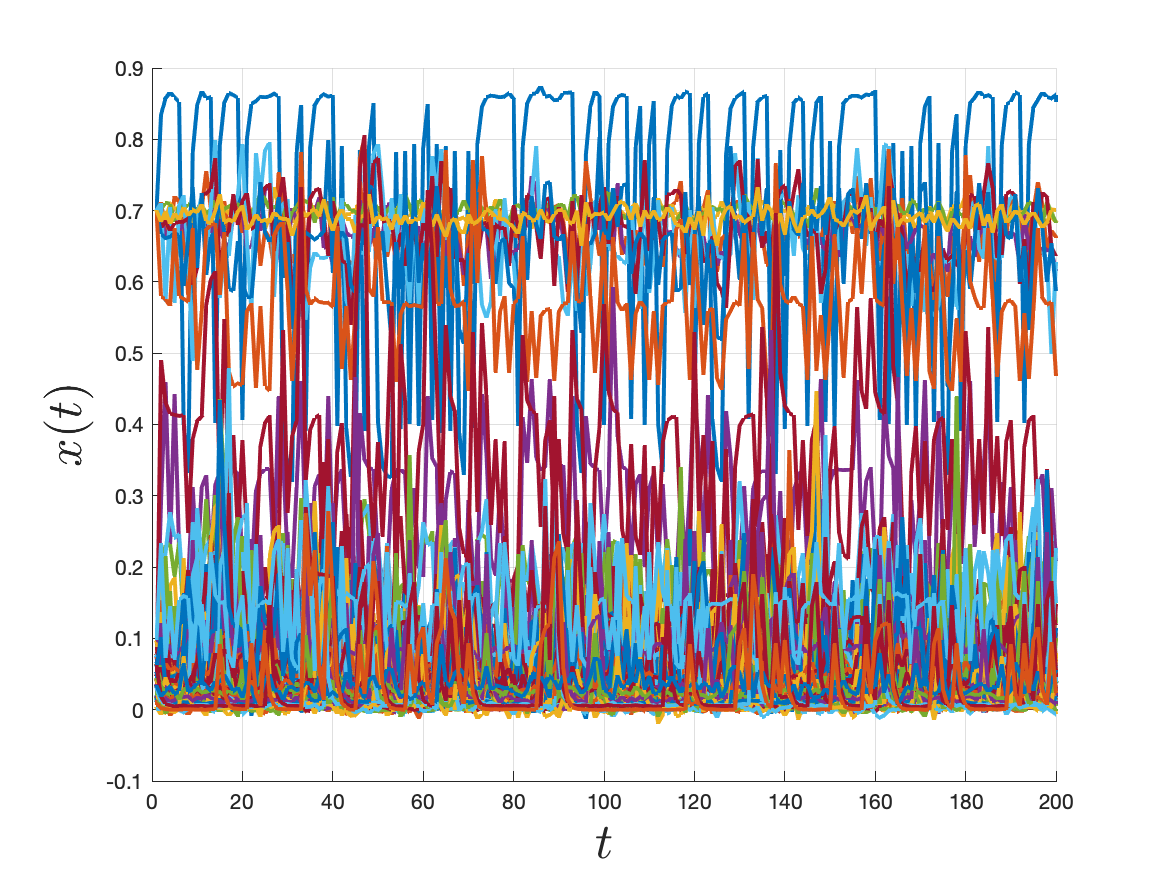}}& \subfigure[Manifested opinions \textit{vs} time]{\includegraphics[width=0.475\columnwidth]{./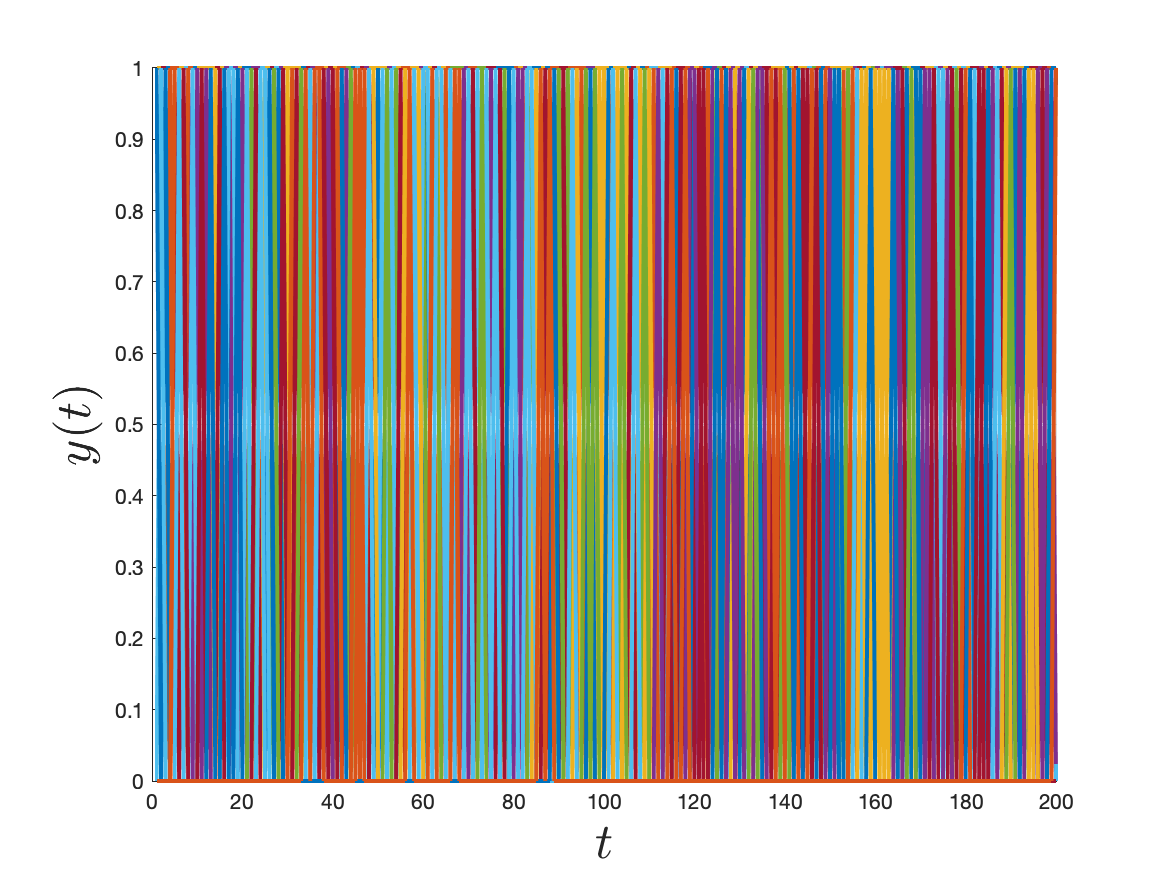}}\\
\subfigure[Time average of hidden inclinations \textit{vs} time]{\includegraphics[width=0.475\columnwidth]{./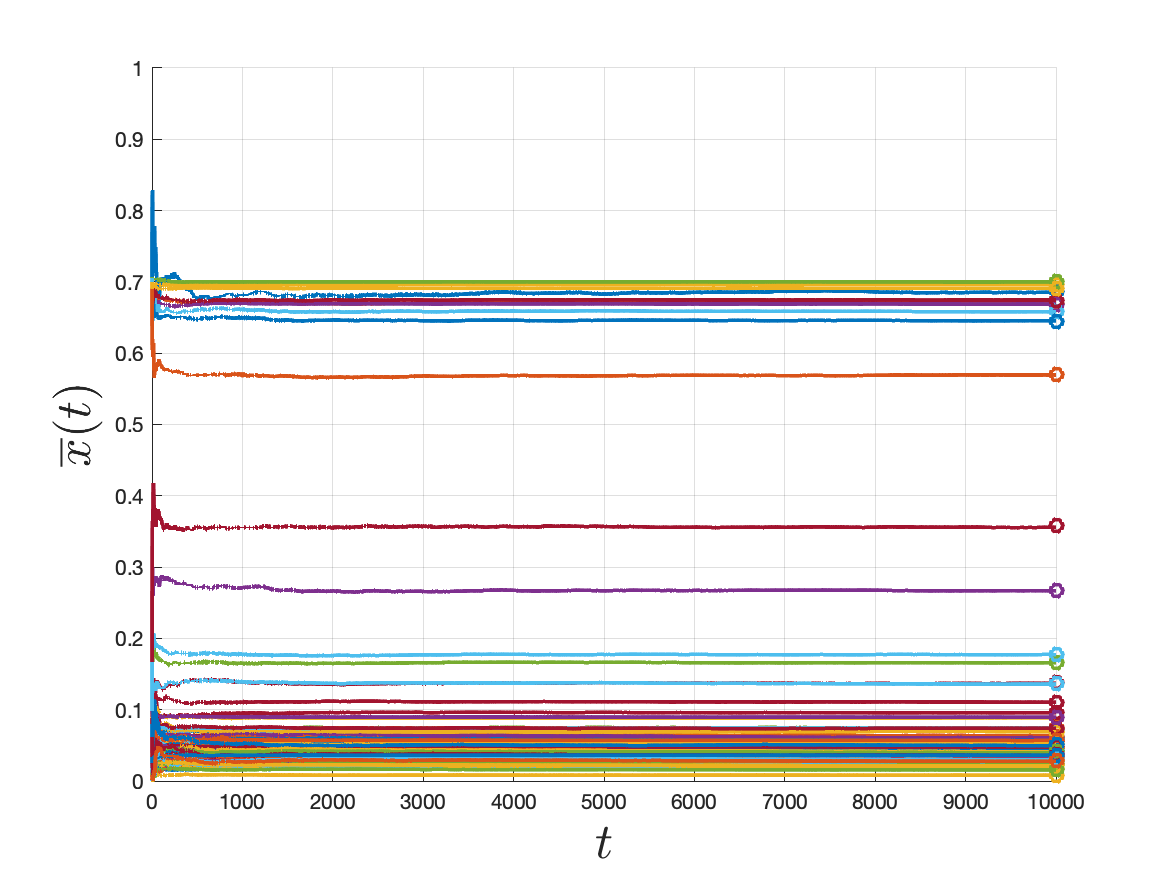}}
&
\subfigure[Time average of manifested opinions \textit{vs} time]{\includegraphics[width=0.475\columnwidth]{./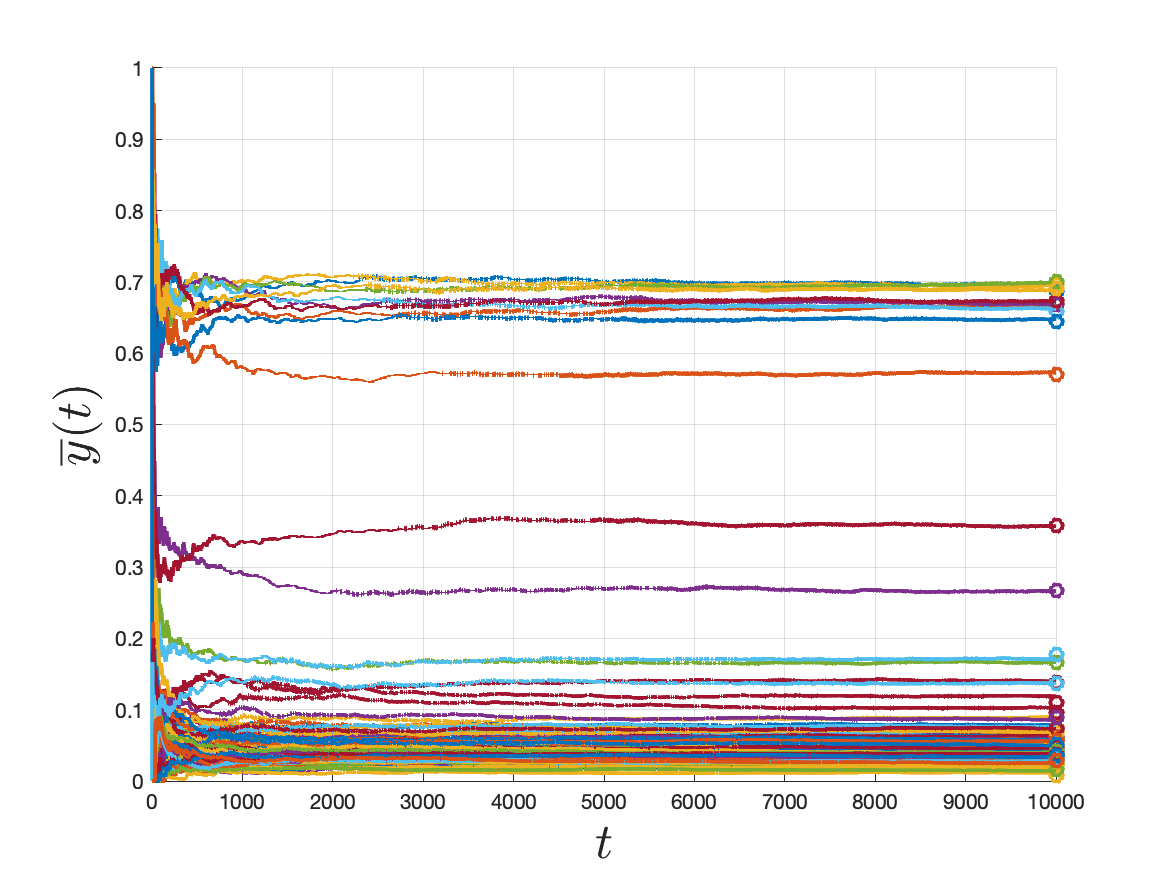}}
\end{tabular}
\caption{Free evolution of the system: evolution of hidden and manifested options and their time averages over time.}
\label{fig:free}
\end{figure}
\paragraph{Free evolution of the social system} \figurename{~\ref{fig:free} } showcases the free evolution of the hidden and manifest opinions, respectively characterized in \eqref{eq:overall_dyn} and Assumption~\ref{ass:Y}, when $\alpha$ is set to 0.25. These results confirm the theoretical results formalized in Theorem~\ref{thm:ergodic} and Theorem~\ref{thm:y}. Indeed, as the latent opinions, i.e., $\vx(t)$ and the acceptance variables $\vy(t)$ continue to oscillate over time (with $t \in \integernonnegative$), the corresponding time averages $\overline{\vx}(t)$ and $\overline{\vy}(t)$ see \eqref{eq:time_average}) converge to the same expected opinion profile.
\subsection{Performance analysis of control policies}
We now close the loop and compare the impact of non-personalized policies (i.e., MFCCP in \eqref{eq:MFCCP}), personalized policies neglecting transients (see \eqref{eq:MBCCP}), and the moving-horizon strategy presented in Section~\ref{sec:control} against the free evolution of the social system. In our tests, we consider a simulation horizon of $T_{\mathrm{sim}}=20$ steps, impose a budget $\beta=10$, and (as previously mentioned) vary the time-scale separation parameter $\alpha\in\{0.25,0.5,0.75,1\}$. In designing both the MBCCP policy in \eqref{eq:MBCCP} and the moving-horizon strategy with estimated average inclination (see~\eqref{eq:conservative_MPC2}), we impose 
\begin{equation}\label{eq:test_weights}
 \boldsymbol{Q}=\boldsymbol{I},~~~ \boldsymbol{R}=r\boldsymbol{I}.
\end{equation}
In our closed-loop simulations $r$ is varied within the interval $(0,5]$ to consider a spectrum of strategies with different penalization of policy costs. When employing the policy in \eqref{eq:conservative_MPC2}, we set the prediction horizon $T$ to $5$.     
\paragraph{Key performance indicators} To evaluate the cost-benefit trade-off of the different policies (against the free evolution of the system), we consider two indexes. 

The first indicator is defined as follows
\begin{equation}\label{eq:social_cost}
\Gamma_{T_{\mathrm{sim}}}=\sum_{t=0}^{T_{\mathrm{sim}}-1}\sum_{v \in V}\mathbb{E}\|1-\boldsymbol{y}(t)\|_2^2,
\end{equation}
represents the \textquotedblleft social cost\textquotedblright \ of a policy, providing an indication of the effectiveness of each policy strategy at the end of the considered horizon. Specifically, the lower the social cost in \eqref{eq:social_cost} is, the more effective the control policy is in fostering the acceptance of the target choice. Note that this quantity can be computed for both the free and policy-induced evolution of the social system.

As an actual cost-benefit analysis cannot be carried out without considering the effort required by the policymaker/stakeholder to enact it, the second indicator evaluates the policies' \textquotedblleft control cost\textquotedblright, and it is defined as
\begin{equation}
    \Delta u_{T_{\mathrm{sim}}}=\sum_{t=0}^{T_{\mathrm{sim}}-1} \|\boldsymbol{u}(t)\|_2^2.
\end{equation}

\begin{remark}[Evaluating \eqref{eq:social_cost}]\label{remark:monte_carlo} 
The cost-benefit analysis we carry out relies on $20$ Monte Carlo simulations for each value of $r$ considered, changing the realization of the (uncontrollable) external factors affecting the social system. This choice allows us to compute the index in \eqref{eq:social_cost} by averaging over different instances of the problem. 
\end{remark}
\paragraph{Results and discussion}
\begin{figure}\begin{center}
\begin{tabular}{cc}
\includegraphics[width=0.48\columnwidth]{./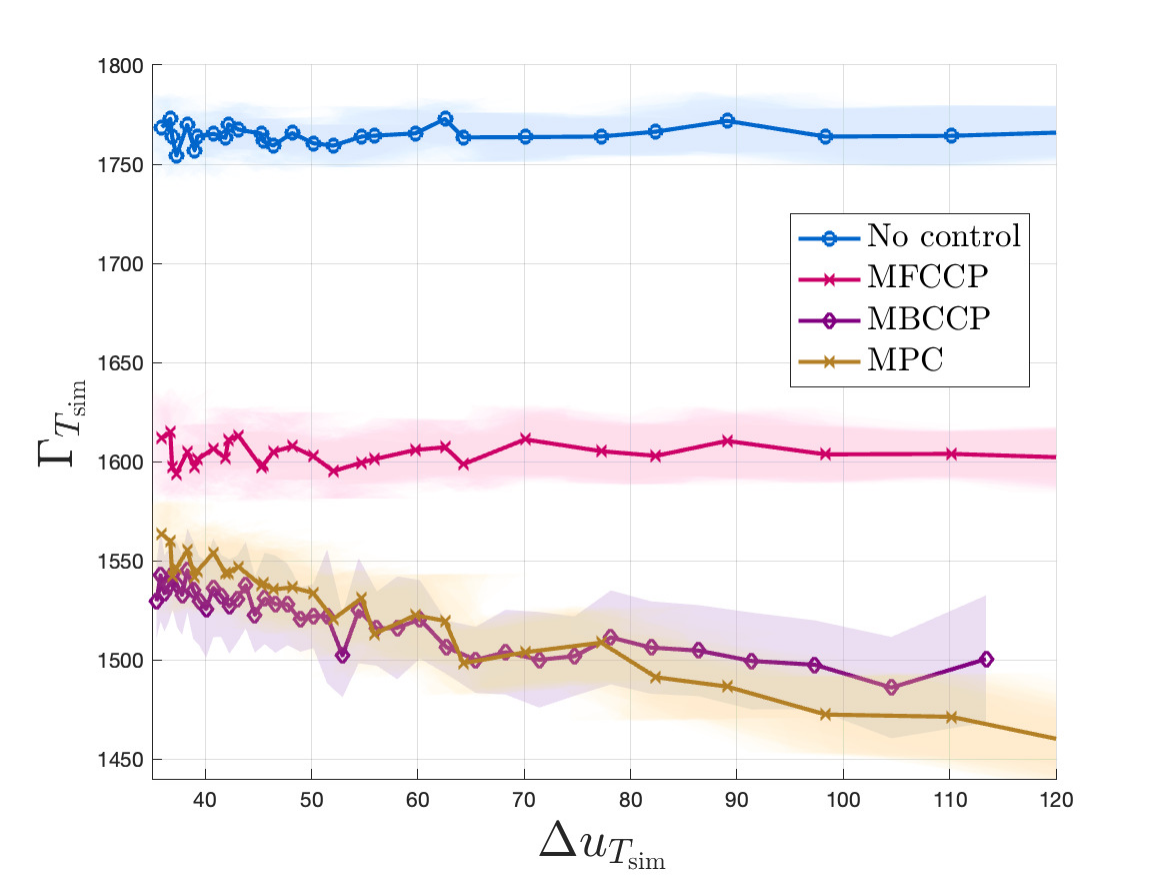}&
\includegraphics[width=0.48\columnwidth]{./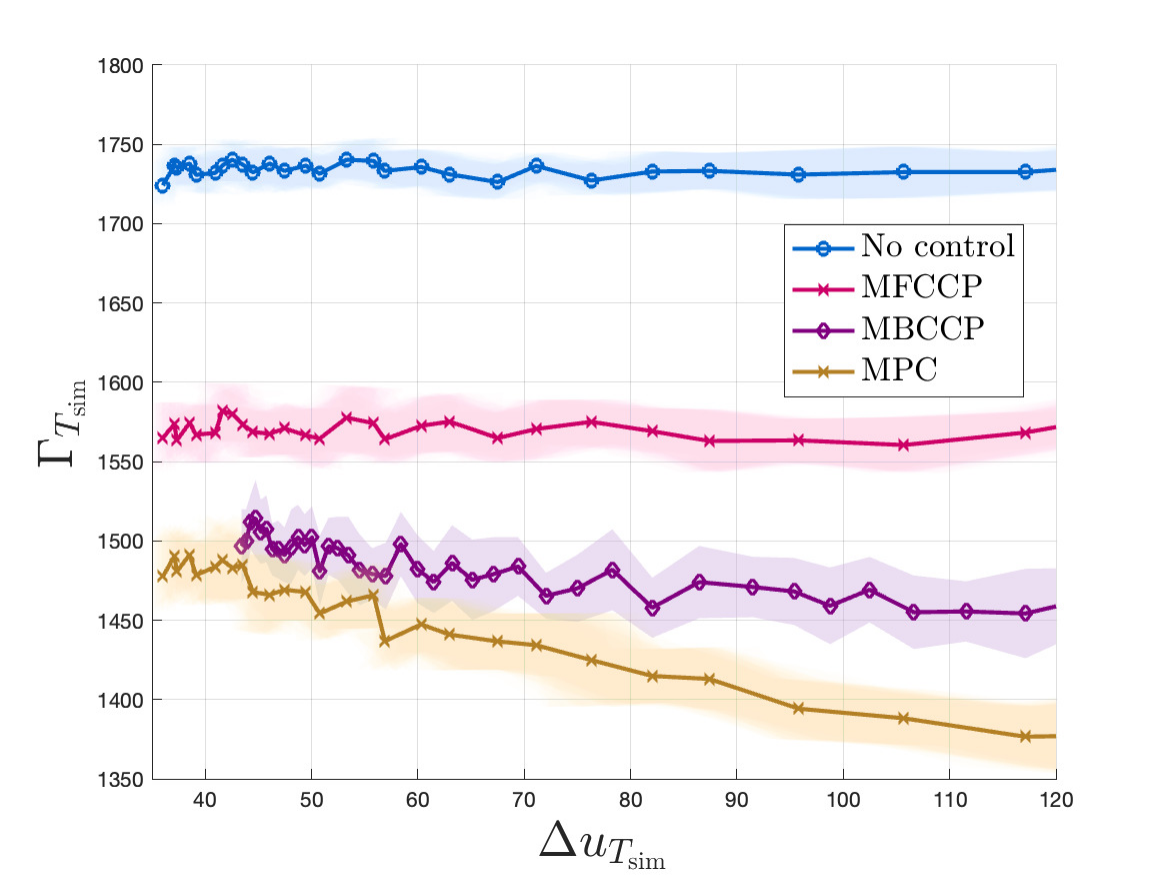}\\
(a)\footnotesize\ $\alpha=0.25$&(b)\footnotesize\ $\alpha=0.5$\\
\includegraphics[width=0.48\columnwidth]{./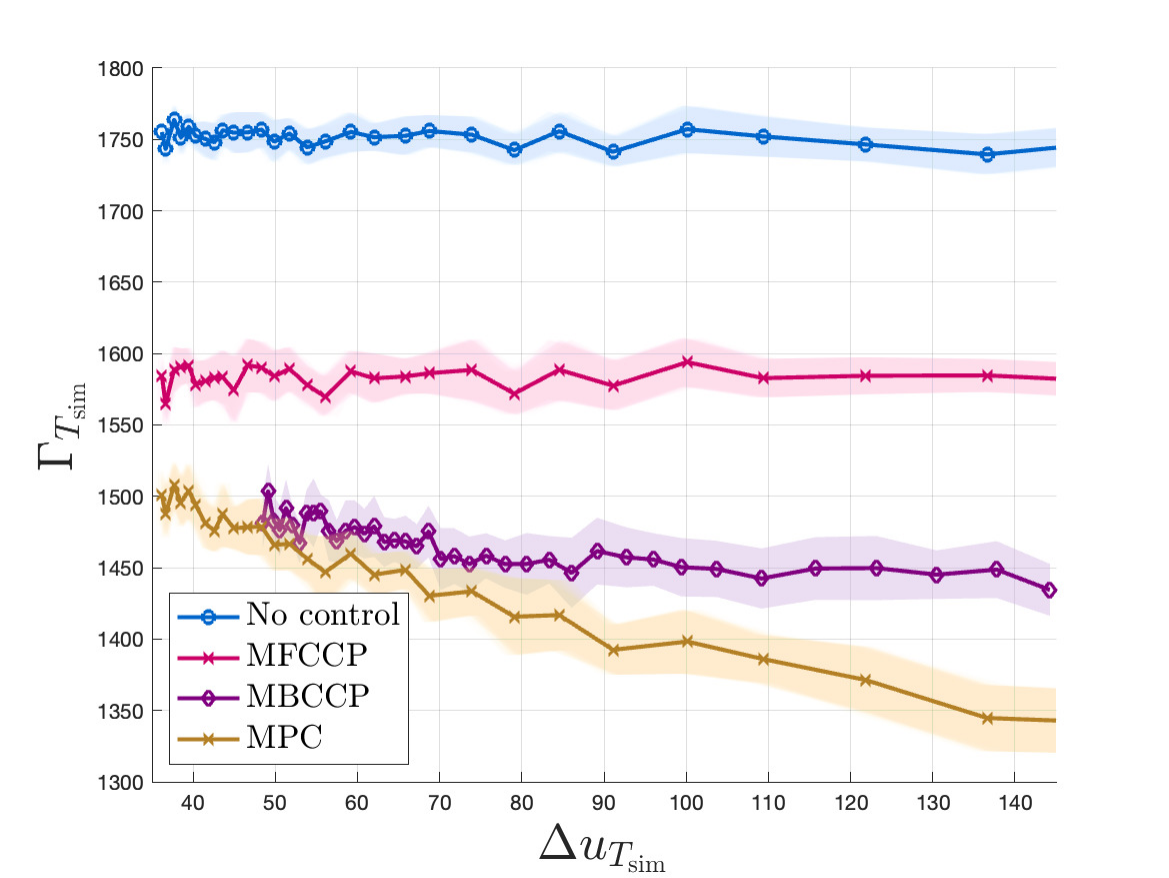}&
\includegraphics[width=0.48\columnwidth]{./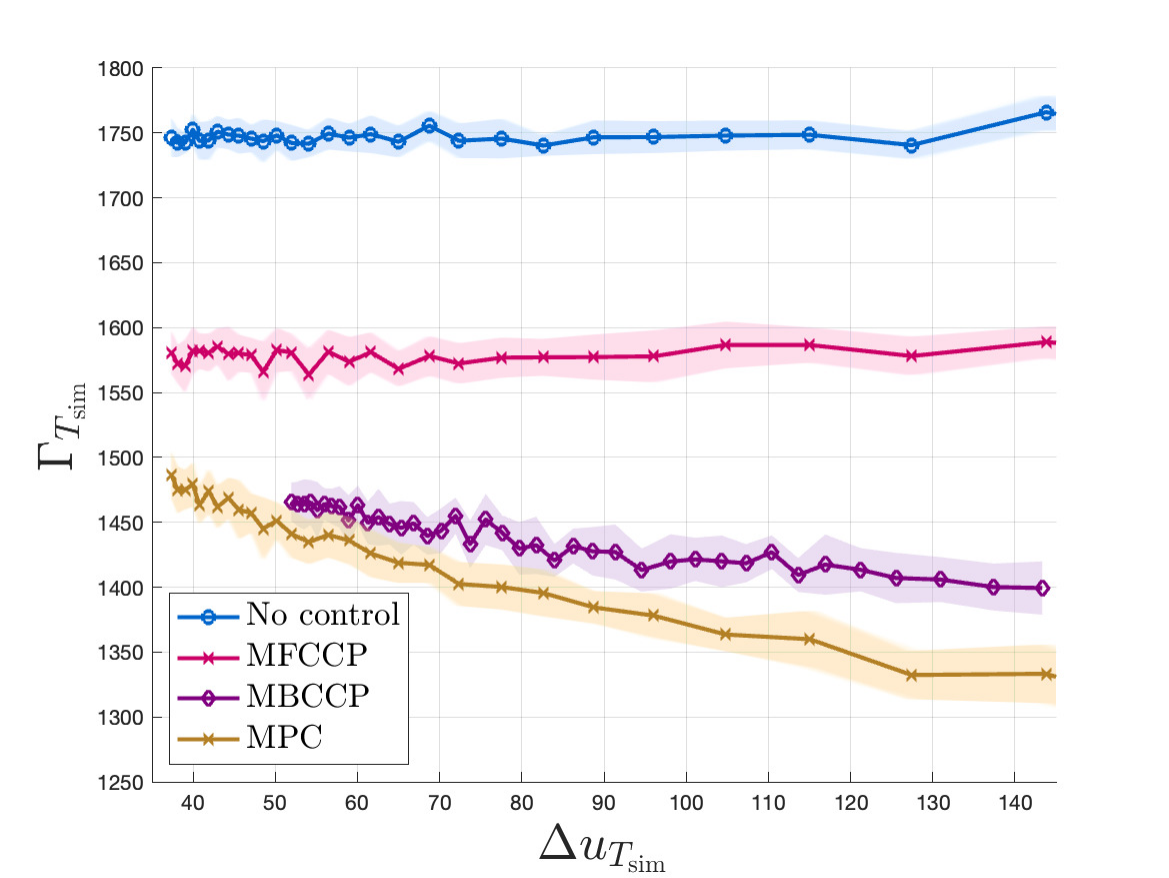}\\
(c)\footnotesize\  $\alpha=0.75$&(d)\footnotesize\ $\alpha=1$
\end{tabular}
\caption{Control cost \textit{vs} social cost for different policies for modular graphs with inter-cluster connectivity $\gamma=0.9$ for $r\in[10^-12,5]$ and different values of $\alpha$. Dotted lines indicate sample means, and shaded areas represent empirical standard deviations across 20 Monte Carlo simulations. The blue curve is used as a reference and represents the average social cost when \textit{no control} is enacted.}\label{fig:ex1}
\label{fig2:example}
\end{center}
\end{figure}
\figurename{~\ref{fig2:example}} shows the control cost against the social cost of the tested policies for the different values of the penalty parameter $r$. We stress that, although the free evolution (blue curve) and the outcome of the MFCCP strategy (magenta curve) are not affected by the choice of $r$, their value is not constant. Indeed, for each $r$, we consider $20$ new realizations of the external factors (see Remark~\ref{remark:monte_carlo}), differently affecting individual choices both in open-loop and with all the tested policies.   

As expected, irrespectively of the value taken by the time-scale separation parameter $\alpha$, enacting either non-personalized policies or personalized ones leads to a social cost that is lower than the free evolution of the system (blue curve), i.e., a more widespread acceptance of the target choice. At the same time, adopting the MFCCP strategy (magenta curve) only marginally improves the social cost with respect to the one attained with no enacted policy, as such strategy is blind to the social dynamics and solely relies on a uniform distribution of resources. In contrast, both the MBCCP (purple curve) and the MPC (yellow curve) approaches are clearly taking advantage of social imitation. Indeed, as the control cost increases, the social one decreases. This result indicates that a larger effort leads to a more widespread adoption of the target choice (and, hence, a greater social benefit if the latter aims to promote improved social welfare). Among the MBCCP strategy (see \eqref{eq:MBCCP}) and MPC-based one (see \eqref{eq:conservative_MPC2}), the latter achieves better results for the same control effort despite relying on an estimate of the hidden individual inclination. For both policies, but with a larger gain when considering the MPC-based one, it is also clear that the more frequent the social pressure (i.e., the higher $\alpha$), the more effective the enacted policy. In fact, it can be seen that as $\alpha$ increases, the social cost is lowered for the same amount of control effort.          
\subsection{Impact of the network topology}
By focusing on the MPC strategy only and by setting $r=1$, we now analyze the impact of the network topology on the \textquotedblleft relative improvement\textquotedblright \ in social cost achieved when using the MPC-based policy over adopting no policy at all, i.e.,
\begin{equation}\label{eq:relative_improvement}
\Delta_{\Gamma_{T_{\mathrm{sim}}}}=\frac{|\Gamma_{T_{\mathrm{sim}}}^{\mathrm{MPC}}-\Gamma_{T_{\mathrm{sim}}}^{\mathrm{ol}}|}{\Gamma_{T_{\mathrm{sim}}}^{\mathrm{ol}}},
\end{equation}
where $\Gamma_{T_{\mathrm{sim}}}^{\mathrm{MPC}}$ and $\Gamma_{T_{\mathrm{sim}}}^{\mathrm{ol}}$ are the values of the social cost defined in \eqref{eq:social_cost} achieved with the MPC policy and in open-loop, respectively. To make this assessment, we consider new randomly generated modular graphs with $100$ nodes (as before), varying the inter-cluster connectivity parameter $\gamma$. This choice allows us to compare the outcome of the policy for both homogeneous networks (i.e., for $\gamma$ small) and clustered ones (corresponding to large values $\gamma$). Note that, for each value of $\gamma$, we consider $50$ different realizations of the modular graph. 

In contrast with the previous analysis, we fix $\boldsymbol{\Lambda}$ by imposing $\lambda_{v}=0.9$ for all $v \in \mathcal{V}$ and we set $x_{v}(0)=0.1$ for all $v \in \mathcal{V}$. This choice enables us to avoid confounding effects induced by polarization of the population as well as differences in the impact of social interactions and external factors in the opinions' evolution.  

\figurename{~\ref{fig3:example}} reports both a scatter plot and a box plot of the relative improvement in social cost (see \eqref{eq:relative_improvement}), considering both $\alpha = 0.25$ and $\alpha = 0.75$. It can be observed that when the time-scale separation parameter $\alpha$ is lower and, thus, social pressure is low, the network topology seems not to play a well-defined role in the outcome of the policy. Indeed, the best results are obtained when the inter-cluster connectivity has intermediate values. On the contrary, under higher social pressure ($\alpha = 0.75$), greater social impacts are obtained for higher values of connectivity. This result ultimately indicates, as expected, that the MPC-based policy benefits from the indirect effect of social contagion. 

\section{Concluding remarks}\label{sec:conclusions}
This work presented a novel extension of the Friedkin and Johnsen model that characterizes the evolution of individual inclinations subject to social influence and to the effects of unpredictable phenomena and external (controlled) interventions. Based on this model and the analysis of its properties, we have devised a set of policy design strategies, also looking at the optimal balance between fostering the social acceptance on average of a target action and containing costs. Our numerical analysis spotlighted the differences, benefits, and drawbacks of the proposed strategies, setting the ground for further research into the use of these models for effective policy design in a social setting. Another avenue of investigation will be assessing the realism of our modeling choices through real-world data.
\begin{figure}[!tb]
\begin{center}
\begin{tabular}{cc}
\includegraphics[width=0.46\columnwidth]{./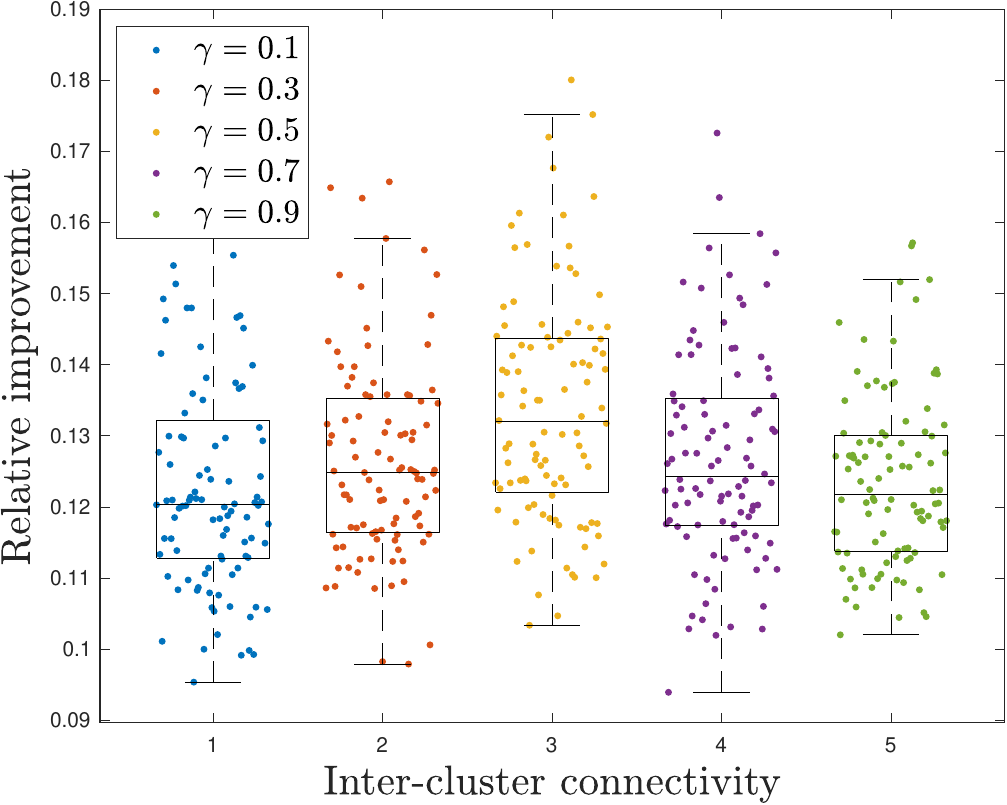}&
\includegraphics[width=0.46\columnwidth]{./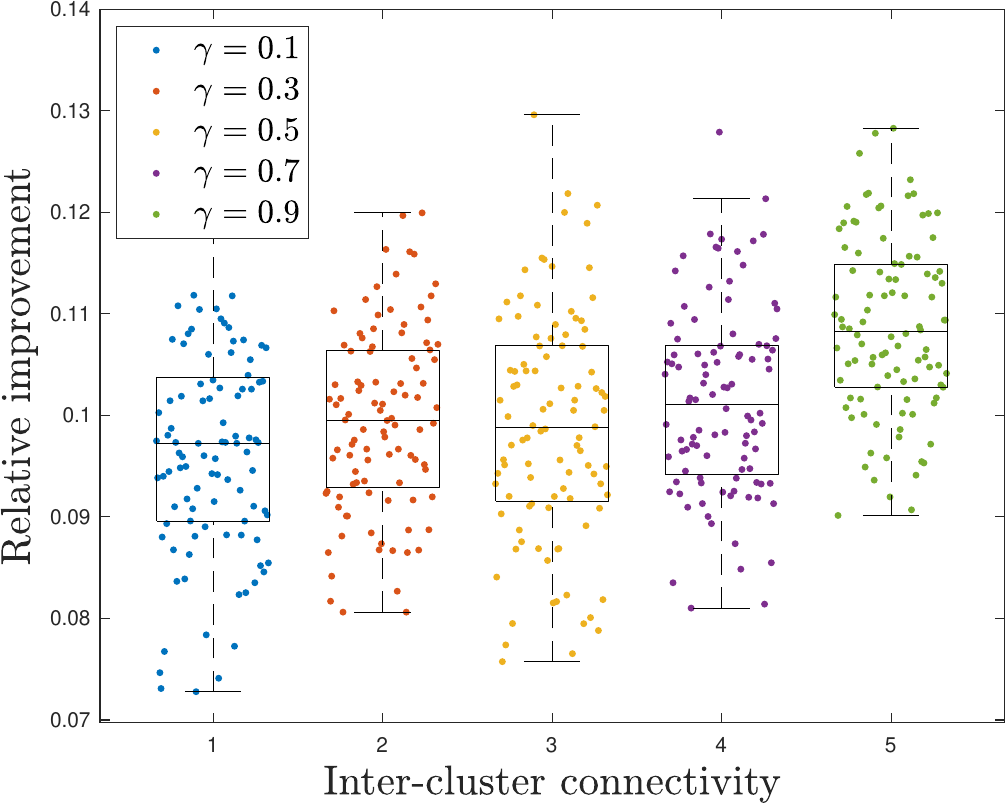}\\
(a)\footnotesize\ $\alpha=0.25$&(b)\footnotesize\ $\alpha=0.75$
\end{tabular}
\caption{Box and scatter plot showing the relative shift of social cost for $50$ realizations of random modular graphs with $100$ nodes for each value of the inter-cluster connectivity parameter $\gamma$ for different values of $\alpha$.}\label{fig:ex1}
\label{fig3:example}
\end{center}
\end{figure}

\bibliographystyle{plain}
\bibliography{Arxiv_tac_25.bib}

\end{document}